\documentclass[10pt,journal]{IEEEtran}

\usepackage{amsthm}
\usepackage{amsmath}
\usepackage{amssymb}
\usepackage{amsmath}
\usepackage{amssymb}
\usepackage{epsfig}
\usepackage{epsf}
\usepackage{subfigure}
\usepackage{graphicx}
\usepackage{url}
\usepackage[]{authblk}
\usepackage{cite}
\usepackage{epstopdf}

\def\BibTeX{{\rm B\kern-.05em{\sc i\kern-.025em b}\kern-.08em
    T\kern-.1667em\lower.7ex\hbox{E}\kern-.125emX}}

\newtheorem{deft}{Definition}
\newtheorem{thm}{Theorem}
\newtheorem{lem}{Lemma}
\newtheorem{crl}{Corollary}

\newcounter{numcount}

\setcounter{numcount}{1}

\begin{document}
\newcommand{\caseone} { { \nearrow } { \hspace{-3.8mm} \searrow } {\hspace{-3.85mm}  \raisebox{4.7pt}{{$\rightharpoonup$}}} {\hspace{-3.85mm}  \raisebox{-4.7pt}{{$\rightharpoondown$}}} }

\newcommand{\casetwo} { { \searrow } {\hspace{-3.85mm}  \raisebox{4.7pt}{{$\rightarrow$}}} {\hspace{-3.85mm}  \raisebox{-4.7pt}{{$\rightharpoondown$}}} }

\newcommand{\casethree} { { \nearrow } {\hspace{-3.85mm}  \raisebox{4.7pt}{{$\rightarrow$}}} {\hspace{-3.85mm}  \raisebox{-4.7pt}{{$\rightharpoondown$}}} }

\newcommand{\casefour} { { \raisebox{4.7pt}{{$\rightarrow$}}} {\hspace{-3.85mm}  \raisebox{-4.7pt}{{$\rightarrow$}}} }

\newcommand{\casefive} { { \raisebox{4.7pt}{{$\rightarrow$}}} }

\newcommand{\casesix} { { \searrow } {\hspace{-3.85mm}  \raisebox{4.7pt}{{$\rightarrow$}}}  }

\newcommand{\caseseven} { { \nearrow } {\hspace{-3.85mm}  \raisebox{4.7pt}{{$\rightharpoonup$}}} }

\newcommand{\caseeight} { { \nearrow } { \hspace{-3.8mm} \searrow } {\hspace{-3.85mm}  \raisebox{4.7pt}{{$\rightharpoonup$}}} }

\newcommand{\casefifteen} { { \nearrow } {\hspace{-3.85mm} \searrow } }

\title{Epidemic Threshold of an SIS Model in Dynamic Switching Networks}

\author{Mohammad~Reza~Sanatkar,
        Warren~N.~White,
        Balasubramaniam~Natarajan,
        Caterina~M.~Scoglio,
        and~Karen~A.~Garrett
        \thanks{Mohammad Reza Sanatkar is with the Department of Electrical and Computer Engineering, Duke University, Durham, NC, USA. Email: {\sffamily reza.sanatkar@duke.edu}.}
        \thanks{Warren N. White is with the Department of Mechanical and Nuclear Engineering, Kansas State University, Manhattan, KS, USA. Email: {\sffamily wnw@k-state.edu}.}
        \thanks{Balasubramaniam Natarajan and Caterina M. Scoglio are with the Department of Electrical and Computer Engineering, Kansas State University, Manhattan, KS, USA. Email: {\sffamily bala@k-state.edu; caterina@k-state.edu}.}
        \thanks{Karen A. Garrett is with the Institute for Sustainable Food Systems and Plant Pathology Department, University of Florida, Gainesville, FL, USA. Email: {\sffamily karengarrett@ufl.edu}.}
}

\maketitle

\begin{abstract}
In this paper, we analyze dynamic switching networks, wherein the networks switch arbitrarily among a set of topologies. For this class of dynamic networks, we derive an epidemic threshold, considering the SIS epidemic model. First, an epidemic probabilistic model is developed assuming independence between states of nodes. We identify the conditions under which the epidemic dies out by linearizing the underlying dynamical system and analyzing its asymptotic stability around the origin. The concept of joint spectral radius is then used to derive the epidemic threshold, which is later validated using several networks (Watts-Strogatz, Barabasi-Albert, MIT reality mining graphs, Regular, and Gilbert). A simplified version of the epidemic threshold is proposed for undirected networks. Moreover, in the case of static networks, the derived epidemic threshold is shown to match conventional analytical results. Then, analytical results for the epidemic threshold of dynamic networks are proved to be applicable to periodic networks. For dynamic regular networks, we demonstrate that the epidemic threshold is identical to the epidemic threshold for static regular networks. An upper bound for the epidemic spread probability in dynamic Gilbert networks is also derived and verified using simulation. 
\end{abstract}

\begin{IEEEkeywords}
Dynamic Networks, Epidemic Threshold, Dynamical System.
\end{IEEEkeywords}

\vspace{0.1 in}

\section{Introduction\label{section:introduction}}
Epidemics typically start with some initial infected nodes. There is a probability that healthy neighboring nodes, close to the infected nodes, become infected. With time and, in some cases, with external intervention, infected nodes recover and revert to a healthy state. The study of epidemic dispersal on networks aims at understanding how epidemics evolve and spread in networks. When an infection enters a network, it is very useful to be able to determine whether it will die out or become a massive out break. The epidemic threshold addresses this question, taking into account both the network topology and the epidemic strength. The spread of epidemics in static networks has been studied extensively \cite{Kostova,Faloutsos3,Faloutsos4,Chen,Piet,Ganesh,Lindquist,Vespignani2}. More recently, epidemic dispersal in dynamic networks has garnered attention \cite{Faloutsos1,Faloutsos2,Volz,Taylor,Gross,Rocha, Bansal, Smieszek,Kamp,Rocha2,Volz2,Chen5}. In a dynamic network, links between nodes are functions of time. This dynamic nature presents a more realistic picture of the spread of epidemic infection. Let us examine a few examples of dynamic networks which can be modeled. First in the area of human epidemics, the nodes (human beings) are constantly moving from one location to another, thus the contact networks between people change over time. In the case of a mobile ad hoc network, each device can move independently and change its links to other devices frequently, which results in dynamic topologies. Many Bluetooth devices are becoming susceptible to viruses such as Cabir or Comm Warrior. Another classic example of dynamic networks is the networks for the spread of diseases among animals or plants where the factors that influence the spread of disease-causing spores are typically dynamic \cite{karen1, karen2, karen3, karen4}.\\
\indent In this paper, we consider the Susceptible-Infected-Susceptible (SIS) model for epidemic spread. In this model, healthy nodes can become infected through infected neighbors; infected nodes have a probability of recovery. In the SIS model, an infected node after recovery becomes susceptible to infection again. We assume that infected nodes have the recovery probability $\delta$, and that the probability of an infected node infecting its healthy neighbor is $\beta$. There are numerous examples in the real world for which the SIS model is the right choice to model epidemics. For example, several diseases caused by bacteria do not produce immune response in the body, thereby allowing the recovered individuals to return to the susceptible population. The SIS model can also be used for opinion analysis. In a contact network of people, every person can have either a positive or negative attitude toward a subject and can constantly change from positive to negative and vice versa. \\
\indent In this paper, we study dynamic switching networks in which adjacency matrices are randomly chosen from sets of matrices at each step. We do not consider any temporal correlation between subsequent adjacency matrices. In other words, we assume that the process of choosing the adjacency matrix at time index $t$ is independent from adjacency matrices chosen at previous time indices. However, this assumption does not hold for dynamic networks with strong temporal correlations between successive adjacency matrices.\\
\indent  First, the nonlinear dynamic nature of nodal infection probabilities, assuming independence among the states of nodes, will be developed. Then, we will prove that the origin is always an equilibrium point of this time-varying dynamical system, and its stability depends on  network topology and values of $\delta$ and $\beta$. Next, the linearized version of the nonlinear epidemic system is derived to determine asymptotic stability of the origin. We show that if the origin is not a stable equilibrium of the system, the epidemic spreads. The joint spectral radius of a set of matrices is also defined. In Theorem \ref{thm:JSR}, we employ the concept of joint spectral radius in order to derive the analytical epidemic threshold for dynamic networks. In Theorem \ref{thm:symmetric}, the simplified version of the epidemic threshold for undirected networks is derived. Because the epidemic threshold for undirected networks depends only on the largest spectral radius of a set of system matrices, evaluation of the epidemic threshold is computationally less expensive compared to directed networks. In Corollary \ref{crl:Static}, the derived epidemic threshold confirms conventional analytical results for static networks, and then the proposed epidemic threshold for dynamic networks is extended to periodic networks. We also study epidemic spread in dynamic regular networks and show that the epidemic threshold for dynamic regular networks is identical to that for static regular networks. An upper bound for the probability of an epidemic spreading in dynamic Gilbert networks is derived. Finally, we simulate epidemics in Watts-Strogatz, Barabasi-Albert, Regular, and dynamic Gilbert networks in order to validate our analytical results. Additionally, we examine our theoretical results in the context of real networks by considering MIT reality mining graphs \cite{MIT}.\\
\indent The rest of the paper is organized as follows. In Section \ref{section:relatedwork}, we review related prior work on epidemic thresholds in dynamic networks. Section \ref{section:analytical} contains general analytical results for epidemic thresholds for dynamic networks as well as simplified epidemic thresholds for special cases of dynamic networks. In Section \ref{section:simulation}, we use simulation results to validate our theoretical analysis.

\section{Related Previous Work\label{section:relatedwork}}
In \cite{Volz}, the epidemic threshold for an Susceptible-Infected-Recovered (SIR) model is derived for a simple class of dynamic random networks. In these networks, the number of neighbors of a given node is fixed, but its neighbors change stochastically as a Poisson process through instantaneous neighbor exchanges. Pairs of edges are chosen continually and randomly with equal probability, and they are instantaneously interchanged. In \cite{Taylor}, the authors present a model that describes an SIS epidemic on dynamic networks using a set of ordinary differential equations. The SIS effective degree model for a static contact network in \cite{Lindquist} is modified by introducing link activation and deletion rates. The epidemic threshold is calculated for this model and shows that the limiting of the maximum nodal degree of a network can prevent the outbreak of epidemic.
In \cite{Gross}, the authors study epidemic dynamics on an adaptive network, in which susceptible nodes try to avoid infection from infected nodes. To achieve this, the susceptible nodes cut their links with infected nodes, using a constant rewiring probability, and replace them by links with other susceptible nodes. The adaptive rewiring increases isolation of infected individuals, and simultaneously, contributes to the formation of a highly connected susceptible cluster. Consequently, the local effect of rewiring increases the epidemic threshold, while the topological effect renders the network vulnerable to epidemics. Moreover, the adaptive nature of the system leads to the emergence of bistability and limit cycles in its dynamical behavior; however, only one continuous dynamical transition exists in the static networks. 
In \cite{Faloutsos1}, the authors derive the epidemic threshold for dynamic networks with alternating (periodic) adjacency matrices. They consider the SIS model for epidemic propagation in networks and show that if the dynamic behavior of a time-varying network can be characterized by T repeating alternating graphs, and $L=\{\boldsymbol{A_{1}}, \boldsymbol{ A_{2}}, ...,\boldsymbol{ A_{T}}\}$, then the system matrix, $\boldsymbol{S}$, of this dynamical system can be expressed as

\begin{equation}
\boldsymbol{ S}=\prod_{i=1}^{T}{\left[(1-\delta)\boldsymbol{ I}+\beta \boldsymbol{ A_{i}})\right]} \label{eq:white}\text{,}
\end{equation} 

\noindent where the dimension of $\boldsymbol{ A_{i}}$ is $n \times n$ ($n$ is the number of nodes), ${\bf I}$ is an $n$ by $n$ identity matrix, and $\delta$ and $\beta$ denote, respectively, the recovery probability and infection probability. The authors of \cite{Faloutsos1} prove that if the spectral radius of the system matrix is less than $1$, the origin is an asymptotically stable equilibrium point of the system, and the epidemic dies out. This result holds only for cases having repeating patterns of adjacency matrices, with the order of repetition is preserved. In \cite{Faloutsos2}, the authors study malware propagation on mobile ad hoc networks. They extend their results for the epidemic threshold of periodic networks in \cite{Faloutsos1} to general cases in which the repeating order of adjacency matrices can be arbitrary. In Theorem I (Mobility model threshold) of their paper, they state that if a mobility model can be represented as a sequence of connectivity graphs $L=\{\boldsymbol{A_{1}},\boldsymbol{ A_{2}},...,\boldsymbol{ A_{T}}\}$, with one adjacency matrix $\boldsymbol{A_{t}}$ for each index $t\in\{1,2,..,T\}$, then the epidemic threshold is

\begin{equation}
\tau=\lambda_{S}\text{,}
\end{equation} 

\noindent where $\lambda_{S}$ is the largest eigenvalue of the matrix $\boldsymbol{S}$ defined in Eq. (\ref{eq:white}).\\
\indent This theorem claims that, for a given dynamic network with an adjacency matrix arbitrarily chosen from a set of matrices at each index, the condition for asymptotic stability is that the spectral radius of the matrix $\boldsymbol{ S}$ is less than $1$. This is different from our analytical results in Theorem \ref{thm:JSR} discussed in Section \ref{section:analytical} given the same assumptions. 

\vspace{0.1 in}

\section{ANALYTICAL RESULTS\label{section:analytical}}
In this section, we develop a dynamical system for epidemic spread, assuming spatial independence between states of nodes in a given network. A linearized version of the dynamical system is then derived to determine the epidemic threshold. Next, by employing the joint spectral radius, we quantify the epidemic threshold for dynamic networks, proving that the epidemic threshold in undirected networks depends only on the maximum spectral radius of the set of system matrices. Then we extend the results so obtained to static and periodic networks. We also show that the epidemic threshold for dynamic regular networks is equal to the epidemic threshold of static regular networks. Finally, we calculate the upper bound for the probability of epidemic spreading in dynamic Gilbert networks.\\
\indent Using the above cited assumption, we can write the infection probability of each node in the network as 

\begin{equation}
p_{i}(t+1)=1-p_{i}(t)\delta-(1-p_{i}(t))\prod_{j\in N_{i}(t)}{\left[1-p_{j}(t)\beta\right]}\label{eq:P}\text{,}
\end{equation}
\noindent where $N_{i}(t)$ denotes the set of neighbors to node $i$ at index $t$, which is a function of time. Infection probabilities of nodes can be interpreted as state variables of a dynamical system. Eq. (\ref{eq:P}) shows that infection probabilities at a given index are nonlinear functions of infection probabilities of the previous index. Therefore, the epidemic dynamical system is nonlinear. The corresponding state space of this nonlinear system is the subspace $[0,1]^{n}$ in $R^{n}$, where $n$ is the number of nodes in the network. For instance, when $n=2$, the state space is a rectangle with vertices represented by points of $(0,0)$, $(0,1)$, $(1,0)$, and $(1,1)$ in $R^{2}$. 
Given initial infection probabilities of the nodes, we can calculate the evolving trajectory of infection probabilities in state space. The family of evolving trajectories of states in the state space is called the phase portrait. The study of the steady state behavior of dynamical systems requires finding the equilibrium points. If $P^{*}$ is an equilibrium point,  $P^{*}(t+1)=P^{*}(t)=P^{*}$. Therefore, we can write

\begin{equation}
p^{*}_{i}=1-p^{*}_{i}\delta-(1-p^{*}_{i})\prod_{j\in N_{i}(t)}{\left[1-p^{*}_{j}\beta\right]}\label{eq:PE}\text{,}
\end{equation}

\noindent where $p_{i}^{*}$ is the infection probability of the $i^{th}$ node in steady state, if an asymptotic equilibrium point is present. Eq. (\ref{eq:PE}) is the equilibrium equation corresponding to node $i$. In order to find equilibrium points of a given epidemic system with $n$ nodes, we must solve a system of $n$ equations with $n$ unknowns. In the case of dynamic networks, this system of equations has been changing with time, and equilibrium points, by definition, are static points that satisfy this system of equations for all time. An epidemic dynamical system may have more than one equilibrium point. 
From Eq. (\ref{eq:PE}), the origin is always an equilibrium point, meaning that, for all values of $\beta$ and $\delta$ and for any arbitrary topologies of network, the origin is always an equilibrium point. However, stability of the origin depends on $\beta$, $\delta$, and underlying topology of the network. If the origin is an asymptotically stable equilibrium point, the epidemic dies out. However, the epidemic spreads when the origin is an unstable equilibrium point. 

\subsection{Linearization of System Equations}
\begin{figure}[]
\begin{center}
\includegraphics[width=3.5in]{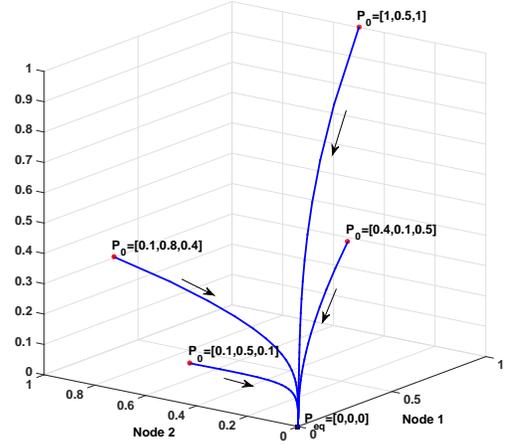}
\caption{Phase portrait of an epidemic network with three nodes in which the epidemic dies out.}
\label{fig:die}
\end{center}
\vspace{-0.4cm}
\end{figure}

One way to identify the stability status of an equilibrium point of a nonlinear system is to study stability of the linearized system at that equilibrium point. In the case of epidemic networks, we are interested in determining the stability status of the origin. If the origin is an asymptotically stable point, the epidemic dies out only if no other equilibrium points exist in the subspace $[0,1]^{n}$; otherwise, asymptotic stability is only local. Therefore, we linearize the epidemic nonlinear system at the origin. Ignoring the nonlinear terms in Eq. (\ref{eq:P}), we can write

\begin{equation}
p_{i}(t+1)=p_{i}(t)(1-\delta)+\sum_{j\in N_{i}(t)}{p_{j}(t)\beta}\label{eq:PL}\text{.}
\end{equation}
  
\noindent We can rewrite (\ref{eq:PL}) in the form of a matrix equation as

\begin{equation}
\boldsymbol{P_{t+1}}=\left[(1-\delta)\boldsymbol{I}+\beta \boldsymbol{A_{t}}\right]\boldsymbol{P_{t}}\label{eq:PM}\text{,}
\end{equation}

\noindent where $\boldsymbol{P_{t}}=[p_{1}(t) p_{2}(t) ... p_{n}(t)]^{T}$ is the system state, $\boldsymbol{A_{t}}$ is the adjacency matrix at index $t$, and $I$ denotes the $n\times n$ identity matrix. We define $\boldsymbol{M_{t}}$ as $\boldsymbol{M_{t}}=\left[(1-\delta)\boldsymbol{I}+\beta \boldsymbol{A_{t}}\right]$ the system matrix at index $t$. 

\begin{figure}[]
\begin{center}
\includegraphics[width=3.5in]{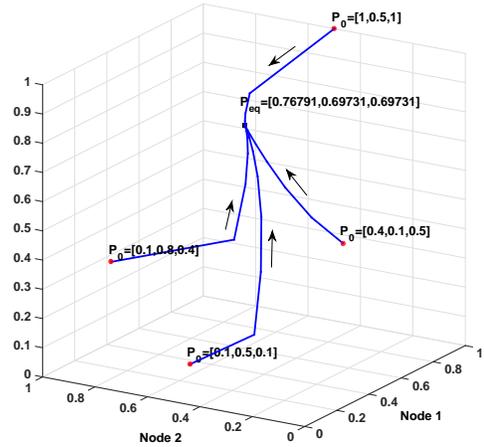}
\caption{Phase portrait of an epidemic network with three nodes in which the epidemic spreads.}
\label{fig:spread}
\end{center}
\vspace{-0.4cm}
\end{figure}

\subsection{Epidemic Threshold}
As mentioned earlier, stability of the origin in the epidemic dynamical system determines under what conditions epidemics die out. 
\noindent If the origin is an asymptotically stable equilibrium point, the system state reaches the origin and infection probabilities of all nodes become zero and remain zero. Fig. \ref{fig:die} depicts trajectories of state evolution of an epidemic network for different initial infection probabilities. The network has three nodes and its adjacency matrix is static. Node $1$ is connected to nodes $2$ and $3$, but no link exists between nodes $2$ and $3$. Each axis of the 3D plot in Fig. \ref{fig:die} represents the infection probability of one of the three nodes over time. Dots with $\boldsymbol{P_{0}}$ label in state space represent initial infection probabilities, and the dot with $\boldsymbol{P_{eq}}$ represents the equilibrium point of the system. The probability constants used are $\delta=0.2$ and $\beta=0.1$. We see that all the trajectories reach the origin regardless of their initial states. Fig. \ref{fig:die} shows that, for these probability constants, the origin is asymptotically stable and epidemics die out. 
In the case of spreading epidemics, the origin is not a stable equilibrium point and the state variables converge on a non-zero equilibrium point and remain at that point. This equilibrium point determines the final fraction of infected nodes. Fig. \ref{fig:spread} shows trajectories of state evolution of the same network depicted in Fig. \ref{fig:die}. The initial states are identical to initial states in Fig. \ref{fig:die}. However, the value of $\beta$ is different. In Fig. \ref{fig:spread}, $\beta=0.6$, and the epidemic spreads out. For all different infection probability vectors, the system state reaches the equilibrium point $\boldsymbol{P_{eq}}=[0.76791,0.69731,0.69731]$ with elements that represent, respectively, infection probability of nodes $1$, $2$, and $3$ in steady state.\\
\indent Before tackling the problem of stability of the origin, essentials definitions must be presented: 

\begin{deft} Given $M$ is a set of matrices, define
\label{def:JSR}
\begin{equation}
\widehat{\rho}_{k}(M,||.||):=sup\left \{\left|\left|\prod_{i=1}^{k}{\boldsymbol{M_{i}}}\right|\right|:\boldsymbol{M_{i}}\in M \text{ for } 1\leq i\leq  k \right\} \nonumber \text{,}
\end{equation}

\noindent where $\widehat{\rho}_{k}(M)$ is the largest possible norm of all products of $k$ matrices chosen in the set $M$. The joint spectral radius $\widehat{\rho}(M)$ is defined as \normalfont{\cite{Strang}}

\begin{equation}
\widehat{\rho}(M):=\lim_{k\to\infty}\widehat{\rho}_{k}(M,||.||)^{\frac{1}{k}} \text{.}
\end{equation}

\end{deft}

\noindent Therefore, the joint spectral radius of set $M$ is the maximum possible norm of products of matrices in the set $M$ when the number of products $k$ goes to infinity.
\begin{deft} Given $M$ is a set of matrices, define
\begin{equation}
\overline{\rho}_{k}(M):=sup\left \{\rho\left(\prod_{i=1}^{k}{\boldsymbol{M_{i}}}\right):\boldsymbol{M_{i}}\in M \text{ for } 1\leq i\leq  k \right\} \nonumber \text{,}
\end{equation}

\noindent where $\rho$ denotes the spectral radius and $\overline{\rho}_{k}(M)$ is the largest possible spectral radius of all products of $k$ matrices chosen in the set $M$. The generalized spectral radius $\overline{\rho}(M)$ is defined as \normalfont{\cite{Lagarias}}

\begin{equation}
\overline{\rho}(M):=\lim_{k\to\infty}\overline{\rho}_{k}(M)^{\frac{1}{k}} \text{.}
\end{equation}

\end{deft}

\noindent In \cite{Wang}, the authors prove that, for a bounded set of matrices, the generalized spectral radius is equal to the joint spectral radius.
\begin{lem} \label{lem:four}\textbf{Four-member inequality}.\normalfont{\cite{Lagarias2}} For a given arbitrary set of matrices $M$ and any $k \geq 1$
\begin{equation}
\overline{\rho}_{k}(M)^{\frac{1}{k}}\leq \overline{\rho}(M) \leq \widehat{\rho}(M) \leq \widehat{\rho}_{k}(M)^{\frac{1}{k}}\nonumber \text{,}
\end{equation}
independent of the induced norm used to define $\widehat{\rho}_{k}(M)$.
\end{lem}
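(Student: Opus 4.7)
The plan is to treat the four-member inequality as three separate sub-inequalities: (i) $\overline{\rho}_{k}(M)^{1/k}\leq\overline{\rho}(M)$, (ii) $\overline{\rho}(M)\leq\widehat{\rho}(M)$, and (iii) $\widehat{\rho}(M)\leq\widehat{\rho}_{k}(M)^{1/k}$. The three tools used throughout are the spectral mapping identity $\rho(A^{m})=\rho(A)^{m}$, the bound $\rho(A)\leq\|A\|$ valid for any induced matrix norm, and submultiplicativity of $\|\cdot\|$.

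For (i), I would take any admissible $k$-tuple $(\boldsymbol{M_{1}},\ldots,\boldsymbol{M_{k}})$ from $M$ and concatenate it $m$ times to produce a length-$mk$ product. Its spectral radius is exactly $\rho\bigl(\prod_{i=1}^{k}\boldsymbol{M_{i}}\bigr)^{m}$, so taking the supremum over such $k$-tuples yields $\overline{\rho}_{mk}(M)\geq\overline{\rho}_{k}(M)^{m}$. Raising to the $1/(mk)$ power and letting $m\to\infty$ in the definition of $\overline{\rho}(M)$ gives $\overline{\rho}(M)\geq\overline{\rho}_{k}(M)^{1/k}$. Inequality (ii) is the simplest: $\rho(A)\leq\|A\|$ gives $\overline{\rho}_{k}(M)\leq\widehat{\rho}_{k}(M)$ term by term, and taking $k$-th roots in the limit yields the middle inequality independently of which induced norm is used to define $\widehat{\rho}_{k}$.

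For (iii), I would combine submultiplicativity with a division-algorithm decomposition. Any length-$n$ product with $n=mk+r$, $0\leq r<k$, splits into $m$ blocks of length $k$ plus a trailing block of length $r$, so $\bigl\|\prod_{i=1}^{n}\boldsymbol{M_{i}}\bigr\|\leq\widehat{\rho}_{k}(M)^{m}\,\widehat{\rho}_{r}(M)$. Taking the supremum, the $n$-th root, and then letting $n\to\infty$ sends the first factor to $\widehat{\rho}_{k}(M)^{1/k}$ and the second to $1$, giving the rightmost inequality.

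The main obstacle is to verify that $\widehat{\rho}(M)=\lim_{k\to\infty}\widehat{\rho}_{k}(M)^{1/k}$ exists as a genuine limit rather than only as a $\limsup$, since otherwise step (iii) would only control the $\limsup$. This follows from Fekete's subadditive lemma applied to $k\mapsto\log\widehat{\rho}_{k}(M)$: the same block-splitting argument shows $\log\widehat{\rho}_{j+k}(M)\leq\log\widehat{\rho}_{j}(M)+\log\widehat{\rho}_{k}(M)$, so the limit exists and equals the infimum. Boundedness of $M$ (automatic for the epidemic matrices $(1-\delta)\boldsymbol{I}+\beta\boldsymbol{A_{t}}$ drawn from a finite family of adjacency matrices) keeps every quantity finite so that all four expressions are well defined and the limits are meaningful.
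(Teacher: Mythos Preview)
The paper does not supply a proof of this lemma at all: it is stated with a citation to Daubechies and Lagarias \cite{Lagarias2} and then used as a black box in the proof of Theorem~\ref{thm:symmetric}. So there is no in-paper argument to compare against; what you have written is a self-contained proof of a result the authors simply imported.

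Your argument is correct and is essentially the standard one. The only point worth tightening is in part (i): the paper's Definition~2 writes $\overline{\rho}(M)=\lim_{k\to\infty}\overline{\rho}_{k}(M)^{1/k}$ rather than a $\limsup$, and your argument as stated only shows that $\overline{\rho}_{k}(M)^{1/k}$ is bounded above by the values along the subsequence $mk$. That is enough to conclude $\overline{\rho}_{k}(M)^{1/k}\leq\limsup_{n}\overline{\rho}_{n}(M)^{1/n}$, but you should say explicitly that this coincides with the limit whenever the limit exists (or, equivalently, observe that the same concatenation argument gives $\sup_{n}\overline{\rho}_{n}(M)^{1/n}=\limsup_{n}\overline{\rho}_{n}(M)^{1/n}$, which is how the generalized spectral radius is usually defined in the cited literature). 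Parts (ii) and (iii) are clean: the termwise bound $\rho(A)\leq\|A\|$ and the Fekete argument from submultiplicativity of $\widehat{\rho}_{k}$ are exactly what one needs, and the latter simultaneously guarantees that $\widehat{\rho}(M)$ is a genuine limit equal to $\inf_{k}\widehat{\rho}_{k}(M)^{1/k}$, which immediately yields the rightmost inequality and makes the whole chain independent of the particular induced norm.
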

\indent Let us consider a set $L$ of all possible adjacency matrices $A_{i}$ and at each time point the adjacency matrix is randomly chosen from this set. $L$ is surely bounded and may be finite or infinite. We define $M$ as the set of system matrices corresponding to the adjacency matrices in $L$. $\boldsymbol{M_{i}}$ is a member of the set $M$ and defined as $\boldsymbol{M_{i}}=\left[(1-\delta)\boldsymbol{I}+\beta \boldsymbol{A_{i}}\right]$. Therefore, $M$ is also bounded. If $L$ is finite, $M$ is also finite; if $M$ is infinite, $L$ is also infinite.

\begin{thm}{\label{thm:JSR}Consider a set $L$ of all possible adjacency matrices of a dynamic network, with infection probability $\beta$, and recovery probability $\delta$. If the joint spectral radius of set $M$ of system matrices is less than one, the origin is an asymptotically stable equilibrium point and the epidemic dies out.\\ } 
\end{thm}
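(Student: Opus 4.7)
The plan is to control the nonlinear recursion of Eq.~(\ref{eq:P}) by the linear recursion $P_{t+1}=M_t P_t$ \emph{componentwise}, and then invoke the joint spectral radius hypothesis to force exponential contraction of trajectories.

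\emph{Step 1 (Componentwise linear majorant).} First I would establish that the nonlinear dynamics is dominated by the linearized one on the state space $[0,1]^n$. The Weierstrass product inequality gives $\prod_j(1-x_j)\ge 1-\sum_j x_j$ for $x_j\in[0,1]$. Substituting $x_j=\beta p_j(t)$ in Eq.~(\ref{eq:P}) and using $1-p_i(t)\in[0,1]$, I obtain
\begin{equation*}
p_i(t+1)\;\le\;(1-\delta)p_i(t)+\beta\sum_{j\in N_i(t)}p_j(t),
\end{equation*}
which is precisely the $i$-th component of $M_t P_t$. Since $\delta,\beta\in[0,1]$, every $M_t$ has non-negative entries, so an induction on $t$ starting from $P_0\in[0,1]^n$ yields
\begin{equation*}
0\;\le\;P_t\;\le\;M_{t-1}M_{t-2}\cdots M_0\,P_0
\end{equation*}
componentwise for every sample path of switching.

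\emph{Step 2 (JSR-driven contraction).} Next I would translate this into a norm bound. Fix any submultiplicative matrix norm $\|\cdot\|$ compatible with a monotone vector norm (for instance, the operator norm induced by $\|\cdot\|_\infty$). Then
\begin{equation*}
\|P_t\|\;\le\;\|M_{t-1}\cdots M_0\|\,\|P_0\|\;\le\;\widehat{\rho}_t(M,\|\cdot\|)\,\|P_0\|.
\end{equation*}
By Definition~\ref{def:JSR}, $\widehat{\rho}(M)=\lim_{k\to\infty}\widehat{\rho}_k(M,\|\cdot\|)^{1/k}<1$, so I can pick $\epsilon>0$ and $k_0$ with $\widehat{\rho}_k(M,\|\cdot\|)\le(1-\epsilon)^k$ for all $k\ge k_0$. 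Together with boundedness of $\widehat{\rho}_k$ for $k<k_0$, this forces $\|P_t\|\to 0$ exponentially, and by non-negativity $P_t\to 0$ componentwise, regardless of the switching sequence $\{M_t\}\subset M$. Hence the origin is asymptotically stable for Eq.~(\ref{eq:P}) and the epidemic dies out.

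\emph{Main obstacle.} The delicate point is Step~1. A textbook linearization argument would leave an $O(\|P_t\|^2)$ remainder that only guarantees \emph{local} stability near the origin and would interact unpleasantly with the arbitrary switching. The saving grace here is that the nonlinear terms dropped in going from Eq.~(\ref{eq:P}) to Eq.~(\ref{eq:PM}) all have the right sign, so Weierstrass's inequality produces a true componentwise upper bound valid on the whole of $[0,1]^n$. Once that majorization is in hand, Step~2 is essentially a direct unpacking of the joint spectral radius definition and requires no further ingredients beyond Lemma~\ref{lem:four}-type bookkeeping of norms.
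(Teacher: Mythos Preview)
Your proposal is correct, and Step~2 is essentially the paper's own argument: pick $\epsilon$ so that $\widehat{\rho}_k(M)^{1/k}<1$ eventually, raise to the $k$-th power, and deduce that every product $\prod_{i=1}^k M_i$ tends to the zero matrix, hence $\big(\prod_i M_i\big)P_0\to 0$ for any $P_0$.

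The substantive difference is your Step~1. The paper works exclusively with the linearized recursion~(\ref{eq:PM}) and, having shown that all its trajectories tend to the origin, invokes the standard principle that asymptotic stability of the linearization at an equilibrium implies (local) asymptotic stability of the nonlinear system~(\ref{eq:P}); indeed the discussion preceding the theorem explicitly flags that the conclusion is only local unless there are no other equilibria. Your use of the Weierstrass product inequality to obtain a componentwise majorant $0\le P_t\le M_{t-1}\cdots M_0\,P_0$, together with the nonnegativity of the $M_t$, upgrades this to \emph{global} asymptotic stability on $[0,1]^n$ uniformly in the switching sequence. So your route is slightly longer but delivers a strictly stronger conclusion than the paper's proof, while sharing the same JSR core.
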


\begin{proof}
Assume $\widehat{\rho}(M)=l<1$. Definition \ref{def:JSR}, a value $k_{0}$ exists for each $\epsilon>0$ with $\epsilon<\min(l,1-l)$ such that

\begin{equation}
l-\epsilon<\widehat{\rho}_{k}(M)^{\frac{1}{k}} < l+\epsilon \text{,  }\forall k>k_{0} \nonumber \text{.}
\end{equation}

\noindent If we raise all sides of the above inequality to the power $k$, we can conclude that 
\begin{equation}
\lim_{k\to\infty}\widehat{\rho}_{k}(M)=0 \nonumber \text{.}
\end{equation}
\noindent  Considering the formula of $\widehat{\rho}_{k}(M)$ in Definition \ref{def:JSR}, for any product of matrices $\boldsymbol{M_{i}} \in M$, we can write

\begin{equation}
0\leq\left|\left|\prod_{i=1}^{k}{\boldsymbol{M_{i}}}\right|\right|\leq \widehat{\rho}_{k}(M) \nonumber \text{.}
\end{equation}

\noindent We have shown that the right hand side of the above inequality goes to zero when $k\rightarrow \infty$. Therefore, for any product of $\boldsymbol{M_{i}}$s, we can write 

\begin{equation}
\lim_{k\to\infty}\left|\left|\prod_{i=1}^{k}{\boldsymbol{M_{i}}}\right|\right|=0 \nonumber \text{.}
\end{equation}

\noindent Considering the fact that $||\boldsymbol{A}||=0 \Leftrightarrow \boldsymbol{A}=0$, if $\underset{k\to\infty}{\lim}\left|\left|\prod_{i=1}^{k}{\boldsymbol{M_{i}}}\right|\right|=0$, we can write

\begin{equation}
\lim_{k\to\infty}\prod_{i=1}^{k}{\boldsymbol{M_{i}}}=0 \nonumber \text{.}
\end{equation}

\noindent If $\underset{k\to\infty}{\lim}\prod_{i=1}^{k}{\boldsymbol{M_{i}}}=0$, for any initial infection probability vector $\boldsymbol{P_{0}}$, we can write

\begin{equation}
\lim_{k\to\infty}\left[\prod_{i=1}^{k}{\boldsymbol{M_{i}}}\right] \boldsymbol{P_{0}}=\lim_{k\to\infty}\left[\prod_{i=1}^{k}{\left[(1-\delta)\boldsymbol{I}+\beta \boldsymbol{A_{i}}\right]}\right] \boldsymbol{P_{0}}=0 \nonumber \text{,}
\end{equation}

\noindent which shows that the origin is an asymptotically stable equilibrium point for any initial infection probability vector and any random sequence of adjacency matrices if the joint spectral radius is less than $1$. In this case, the final infection probability vector is zero and the epidemic dies out.  
\end{proof}

\begin{thm}{\label{thm:symmetric}Consider a set $L$ of all possible adjacency matrices of a dynamic network with undirected graphs, set $M$ of the system matrices corresponding to set $L$ of the adjacency matrices, infection probability $\beta$, and recovery probability $\delta$. If the largest spectral radius of the matrices in set $M$ is less than $1$, the origin is an asymptotically stable equilibrium point and the epidemic dies out.\\ } 
\end{thm}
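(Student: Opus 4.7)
The plan is to reduce Theorem \ref{thm:symmetric} to Theorem \ref{thm:JSR} by showing that, in the undirected case, the joint spectral radius of $M$ is bounded above by the largest spectral radius of a single matrix in $M$. The starting observation is that if the graph at each time index is undirected then every adjacency matrix $\boldsymbol{A_i}$ is symmetric, hence every system matrix $\boldsymbol{M_i}=(1-\delta)\boldsymbol{I}+\beta\boldsymbol{A_i}$ is symmetric as well.

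First I would invoke the standard fact that for a symmetric (more generally, normal) real matrix $\boldsymbol{M}$ one has $\|\boldsymbol{M}\|_2=\rho(\boldsymbol{M})$, where $\|\cdot\|_2$ is the spectral norm induced by the Euclidean norm. Choosing this induced norm in Definition \ref{def:JSR}, submultiplicativity gives, for any product of $k$ matrices drawn from $M$,
\begin{equation}
\left\|\prod_{i=1}^{k}\boldsymbol{M_{i}}\right\|_{2}\leq \prod_{i=1}^{k}\|\boldsymbol{M_{i}}\|_{2}=\prod_{i=1}^{k}\rho(\boldsymbol{M_{i}})\leq \Bigl(\max_{\boldsymbol{M}\in M}\rho(\boldsymbol{M})\Bigr)^{k}. \nonumber
\end{equation}
Taking the supremum over all such products yields $\widehat{\rho}_{k}(M,\|\cdot\|_{2})\leq(\max_{\boldsymbol{M}\in M}\rho(\boldsymbol{M}))^{k}$, so $\widehat{\rho}_{k}(M,\|\cdot\|_{2})^{1/k}\leq \max_{\boldsymbol{M}\in M}\rho(\boldsymbol{M})$ for every $k$.

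Next I would pass to the limit as $k\to\infty$ to conclude $\widehat{\rho}(M)\leq \max_{\boldsymbol{M}\in M}\rho(\boldsymbol{M})$. Lemma \ref{lem:four} guarantees that this bound is independent of the induced norm used, which is why it is legitimate to fix the spectral norm for the computation above and still obtain a statement about $\widehat{\rho}(M)$. Under the hypothesis $\max_{\boldsymbol{M}\in M}\rho(\boldsymbol{M})<1$ we then have $\widehat{\rho}(M)<1$, and Theorem \ref{thm:JSR} applies directly to give asymptotic stability of the origin and extinction of the epidemic.

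The only subtle point — and what I would flag as the main obstacle — is the handling of the case where $M$ is an infinite set of symmetric matrices: one needs $\sup_{\boldsymbol{M}\in M}\rho(\boldsymbol{M})$ rather than $\max$, and must ensure this supremum is strictly less than $1$. Since $M$ is bounded (as noted in the paragraph preceding Theorem \ref{thm:JSR}) and the spectral radius is continuous on symmetric matrices, the supremum is well defined; the hypothesis that the largest spectral radius in $M$ is less than $1$ should be read as this supremum being strictly below $1$, and the inequality $\widehat{\rho}_{k}(M,\|\cdot\|_{2})^{1/k}\leq \sup_{\boldsymbol{M}\in M}\rho(\boldsymbol{M})$ still holds by the same submultiplicativity argument. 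With this minor clarification the reduction to Theorem \ref{thm:JSR} is complete.
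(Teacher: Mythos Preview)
Your argument is correct and shares the paper's key observation that symmetry of each $\boldsymbol{M_i}$ gives $\|\boldsymbol{M_i}\|_2=\rho(\boldsymbol{M_i})$, but the route from there to a bound on $\widehat{\rho}(M)$ differs. You invoke submultiplicativity of the spectral norm to bound every length-$k$ product, obtaining $\widehat{\rho}_k(M,\|\cdot\|_2)^{1/k}\le \sup_{\boldsymbol{M}\in M}\rho(\boldsymbol{M})$ and hence only the one-sided inequality $\widehat{\rho}(M)\le \sup_{\boldsymbol{M}\in M}\rho(\boldsymbol{M})$. The paper instead applies Lemma~\ref{lem:four} at $k=1$: since $\widehat{\rho}_1(M)=\overline{\rho}_1(M)=\sup_{\boldsymbol{M}\in M}\rho(\boldsymbol{M})$ in the symmetric case, the four-member inequality collapses and yields the \emph{equality} $\widehat{\rho}(M)=\sup_{\boldsymbol{M}\in M}\rho(\boldsymbol{M})$. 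Your approach is slightly more elementary (it needs only submultiplicativity, not Lemma~\ref{lem:four}), and the inequality you prove is all that the theorem requires; the paper's approach buys the sharper statement that the joint spectral radius is exactly the largest individual spectral radius, which is of independent interest and is used later in the paper when discussing the simplified threshold for undirected networks.
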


\begin{proof}
If the network graph is undirected, its corresponding adjacency matrix is symmetric so that its corresponding system matrix is also symmetric. We know that for a given symmetric matrix $\boldsymbol{M_{i}}$, we can calculate the induced $2$ norm of $\boldsymbol{M_{i}}$ as 
\begin{equation}
||\boldsymbol{M_{i}}||_{2}=\sqrt{\rho(\boldsymbol{M_{i}}^{T}\boldsymbol{M_{i}})}=\sqrt{\rho(\boldsymbol{M_{i}})^{2}}=\rho(\boldsymbol{M_{i}}) \nonumber \text{.}
\end{equation}

\noindent If we use the induced $2$ norm to calculate $\widehat{\rho}_{1}(M)$ and $\overline{\rho}_{1}(M)$, we can write, respectively,

\begin{equation}
\widehat{\rho}_{1}(M)=sup \left\{\left|\left|\boldsymbol{M_{i}}\right|\right|_{2}:\boldsymbol{M_{i}}\in M\right\}=sup \left\{\rho(\boldsymbol{M_{i}}):\boldsymbol{M_{i}}\in M\right\}\nonumber \text{}
\end{equation}

\noindent and 

\begin{equation}
\overline{\rho}_{1}(M)=sup \left\{\rho(\boldsymbol{M_{i}}):\boldsymbol{M_{i}}\in M\right\}\nonumber \text{.}
\end{equation}

\noindent Therefore, we conclude that for a set of symmetric matrices,

\begin{equation}
\widehat{\rho}_{1}(M)=\overline{\rho}_{1}(M)=sup \left\{\rho(\boldsymbol{M_{i}}):\boldsymbol{M_{i}}\in M\right\} \label{eq:sup} \text{.}
\end{equation}

\noindent Moreover, we mentioned in Lemma \ref{lem:four} that the four-member inequality holds for any $k \geq 1$. Therefore, we can write

\begin{equation}
\overline{\rho}_{1}(M)\leq \overline{\rho}(M) \leq \widehat{\rho}(M) \leq \widehat{\rho}_{1}(M)\label{eq:fourone} \text{.}
\end{equation}

\noindent Considering Eqs. (\ref{eq:sup}) and (\ref{eq:fourone}), we can write

\begin{equation}
\overline{\rho}(M)=\widehat{\rho}(M)=sup\{\rho(\boldsymbol{M_{i}}): \boldsymbol{M_{i}} \in M\} \nonumber \text{.}
\end{equation}

\noindent Therefore, the joint spectral radius of set $M$ of symmetric matrices is equal to the largest spectral radius of matrices in the set. Based on Theorem \ref{thm:JSR}, we conclude that if the largest spectral radius of system matrices of an undirected dynamic network is less than $1$, the origin is an asymptotically stable equilibrium point and the epidemic dies out.
\end{proof}

\begin{crl} \label{crl:Static}
Consider a static epidemic network with adjacency matrix $\boldsymbol{A}$, infection probability $\beta$, and recovery probability $\delta$. The epidemic dies out if $\frac{\beta}{\delta}<\frac{1}{\rho(\boldsymbol{A})}$.
\end{crl}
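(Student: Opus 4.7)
The plan is to obtain this corollary as a direct specialization of Theorem \ref{thm:JSR} to the singleton family $L=\{\boldsymbol{A}\}$, so that $M=\{\boldsymbol{M}\}$ with $\boldsymbol{M}=(1-\delta)\boldsymbol{I}+\beta\boldsymbol{A}$. Because $M$ contains only one matrix, every $k$-fold product appearing in Definition \ref{def:JSR} is simply $\boldsymbol{M}^{k}$, and so $\widehat{\rho}_{k}(M)=\|\boldsymbol{M}^{k}\|$. Gelfand's formula then yields
\begin{equation}
\widehat{\rho}(M)=\lim_{k\to\infty}\|\boldsymbol{M}^{k}\|^{1/k}=\rho(\boldsymbol{M})\nonumber\text{.}
\end{equation}
Equivalently, the four-member inequality of Lemma \ref{lem:four} collapses in the singleton case, since $\overline{\rho}_{k}(M)=\rho(\boldsymbol{M})^{k}$ for every $k$, forcing $\overline{\rho}(M)=\widehat{\rho}(M)=\rho(\boldsymbol{M})$. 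Theorem \ref{thm:JSR} therefore specializes to the statement that the origin is asymptotically stable, i.e., the epidemic dies out, whenever $\rho(\boldsymbol{M})<1$.

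Next I would translate the condition $\rho(\boldsymbol{M})<1$ into a condition on $\rho(\boldsymbol{A})$. Since $\boldsymbol{M}=(1-\delta)\boldsymbol{I}+\beta\boldsymbol{A}$ is an affine function of $\boldsymbol{A}$, the eigenvalues of $\boldsymbol{M}$ are precisely $(1-\delta)+\beta\lambda_{i}$, where the $\lambda_{i}$ are the eigenvalues of $\boldsymbol{A}$. The adjacency matrix $\boldsymbol{A}$ has non-negative entries, so by the Perron--Frobenius theorem $\rho(\boldsymbol{A})$ is itself a real, non-negative eigenvalue of $\boldsymbol{A}$. Because $1-\delta\geq 0$ and $\beta\geq 0$, the real eigenvalue $(1-\delta)+\beta\rho(\boldsymbol{A})$ dominates $|(1-\delta)+\beta\lambda_{i}|$ for every other eigenvalue of $\boldsymbol{A}$, which gives $\rho(\boldsymbol{M})=(1-\delta)+\beta\rho(\boldsymbol{A})$. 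Substituting into $\rho(\boldsymbol{M})<1$ and rearranging produces $\beta\rho(\boldsymbol{A})<\delta$, i.e., $\beta/\delta<1/\rho(\boldsymbol{A})$, which is the claimed threshold.

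The main obstacle is the small but essential step that identifies $\rho(\boldsymbol{M})$ with $(1-\delta)+\beta\rho(\boldsymbol{A})$: a priori, the eigenvalue of $\boldsymbol{M}$ of largest modulus could arise from some complex eigenvalue of $\boldsymbol{A}$ whose shifted-and-scaled value has greater absolute value than the Perron eigenvalue. Entrywise non-negativity of $\boldsymbol{A}$ together with non-negativity of the coefficients $1-\delta$ and $\beta$ is exactly what rules this out via Perron--Frobenius. Apart from this observation, the proof is a one-line unpacking of Theorem \ref{thm:JSR} followed by elementary algebra.
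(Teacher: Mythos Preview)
Your proof is correct and follows essentially the same route as the paper: specialize to the singleton $M=\{(1-\delta)\boldsymbol{I}+\beta\boldsymbol{A}\}$, identify the joint spectral radius with $\rho(\boldsymbol{M})$, and rewrite $\rho(\boldsymbol{M})<1$ as $\beta/\delta<1/\rho(\boldsymbol{A})$. The only minor differences are that the paper invokes the Berger--Wang equality $\widehat{\rho}=\overline{\rho}$ rather than Gelfand's formula, and that you supply a Perron--Frobenius justification for $\rho(\boldsymbol{M})=(1-\delta)+\beta\rho(\boldsymbol{A})$ which the paper simply asserts.
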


\begin{proof}
For a static network, $M$, the set of system matrices has only one element which is $(1-\delta)\boldsymbol{I}+\beta \boldsymbol{A}$. In this case, $\overline{\rho}_{k}(M)$, the largest possible spectral radius of all products of $k$ matrices chosen in the set $M$, can be written as

\begin{equation}
\overline{\rho}_{k}(M)=sup\left \{\rho\left(\prod_{i=1}^{k}{\boldsymbol{M_{i}}}\right):\boldsymbol{M_{i}}\in M\right\}=\rho((1-\delta)\boldsymbol{I}+\beta \boldsymbol{A})^{k} \nonumber \text{.}
\end{equation}

\noindent $M$ is a bounded set, and \cite{Wang} proves that, for a bounded set of matrices the joint spectral radius is equal to the generalized spectral radius. Hence, we can calculate the joint spectral radius as

\begin{equation}
\widehat{\rho}(M)=\lim_{k\to\infty}\overline{\rho}_{k}(M)^{\frac{1}{k}}= \rho((1-\delta)\boldsymbol{I}+\beta \boldsymbol{A}) \nonumber \text{.}
\end{equation}

\noindent According to Theorem \ref{thm:JSR}, the epidemic dies out if the joint spectral radius of the set of system matrices is less than $1$. For a static network, the joint spectral radius is equal to $1-\delta+\beta \rho(\boldsymbol{A})$. Therefore, the epidemic dies out if

\begin{equation}
\frac{\beta}{\delta}<\frac{1}{\rho(\boldsymbol{A})} \label{eq:static} \text{.}
\end{equation}

\end{proof}
\noindent The epidemic threshold for static networks in Eq. (\ref{eq:static}) is identical to the one in its analytical results in \cite{Faloutsos3}. \\

\begin{crl} \label{crl:Periodic}
Consider a dynamic network with a fixed repetition pattern of T adjacency matrices in a set $L=\left\{\boldsymbol{A_{1}},\boldsymbol{A_{2}},...,\boldsymbol{A_{T}}\right\}$, with infection probability $\beta$ and recovery probability $\delta$. The epidemic dies out if

\begin{equation}
\rho(\prod_{i=1}^{T}{\left[(1-\delta)\boldsymbol{I}+\beta \boldsymbol{A_{i}}\right]})<1 \nonumber \text{.}
\end{equation}

\end{crl}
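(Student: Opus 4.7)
The plan is to reduce the periodic setting to a static problem by sampling the linearized dynamics at integer multiples of the period $T$. Define $\boldsymbol{M_i} = (1-\delta)\boldsymbol{I} + \beta \boldsymbol{A_i}$ for $i = 1, \ldots, T$ and let $\boldsymbol{S}$ denote their product taken in the same order as in Eq. (\ref{eq:white}). Because the schedule of adjacency matrices is a fixed period-$T$ repetition, iterating Eq. (\ref{eq:PM}) over one full period gives the stroboscopic recursion $\boldsymbol{P_{t+T}} = \boldsymbol{S}\,\boldsymbol{P_t}$, so that sampling at times $t = nT$ yields a static linear system whose system matrix is exactly $\boldsymbol{S}$.

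Next, I would apply Corollary \ref{crl:Static} to this sampled system, which is equivalent to applying Theorem \ref{thm:JSR} to the singleton set $M' = \{\boldsymbol{S}\}$ whose joint spectral radius is simply $\rho(\boldsymbol{S})$. Whenever $\rho(\boldsymbol{S}) < 1$ this gives $\boldsymbol{P_{nT}} \to \boldsymbol{0}$ as $n \to \infty$. To extend the conclusion to all (not merely integer-period) times, I would write $t = nT + r$ with $0 \leq r < T$ and note that $\boldsymbol{P_t}$ equals a product of at most $T$ of the matrices $\boldsymbol{M_i}$ acting on $\boldsymbol{P_{nT}}$. The collection of such partial-period products is finite, so there is a uniform constant $C$ with $\|\boldsymbol{P_t}\| \leq C\,\|\boldsymbol{P_{nT}}\|$, which forces $\boldsymbol{P_t} \to \boldsymbol{0}$ as well. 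Therefore the origin is asymptotically stable and the epidemic dies out.

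The main obstacle I anticipate is bookkeeping about the product order. The factors $\boldsymbol{M_1}, \ldots, \boldsymbol{M_T}$ do not commute in general, so care is required to ensure that the product $\boldsymbol{S}$ appearing in the hypothesis matches the order in which the $\boldsymbol{M_i}$ actually compose in the state recursion $\boldsymbol{P_{t+T}} = \boldsymbol{M_{t+T}} \cdots \boldsymbol{M_{t+1}}\,\boldsymbol{P_t}$. This is harmless for the spectral radius itself---since cyclic permutations preserve $\rho$, the phase within the period at which we begin sampling does not change the stability criterion---but it does fix which precise product is being taken. Once the ordering convention is pinned down, the argument is a direct reduction to the static case already established in Corollary \ref{crl:Static}, with only a trivial bounded-remainder observation needed to cover the intermediate times.
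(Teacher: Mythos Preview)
Your argument is correct, but it follows a different path from the paper. The paper stays inside the joint spectral radius framework of Theorem~\ref{thm:JSR}: it evaluates $\overline{\rho}_{k}(M)$ along the subsequence $k=mT$, observes that for the fixed periodic schedule this equals $\rho\bigl(\prod_{i=1}^{T}\boldsymbol{M_i}\bigr)^{m}$, uses the Berger--Wang equality $\widehat{\rho}=\overline{\rho}$ for bounded sets to get $\widehat{\rho}(M)=\rho(\boldsymbol{S})^{1/T}$, and then invokes Theorem~\ref{thm:JSR}. Your approach instead bypasses the JSR machinery entirely by taking a stroboscopic section of the linearized dynamics and reducing directly to the already-proved static case (Corollary~\ref{crl:Static}), with a short bounded-remainder argument for the intermediate times. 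What you gain is a more elementary and self-contained argument that makes the ordering and phase issues explicit (your observation that cyclic permutations preserve $\rho$ is exactly the right way to handle the phase ambiguity). What the paper's route gains is a uniform presentation: it exhibits the periodic threshold as a literal special case of the general JSR criterion, so the same theorem covers static, periodic, and arbitrary switching without separate arguments.
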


\begin{proof}
Consider a dynamic network with a fixed repetition pattern of $T$ adjacency matrices and $k=mT$ where $m$ is a positive integer. For the case where $k=mT$, $\overline{\rho}_{k}(M)$ can be written as

\begin{eqnarray}
&&\overline{\rho}_{k}(M)=sup\left \{\rho\left(\prod_{i=1}^{mT}{\boldsymbol{M_{i}}}\right)\right\}\nonumber\\&&=sup\left \{\rho\left(\prod_{i=1}^{m}{\left[\prod_{i=1}^{T}\left[(1-\delta)\boldsymbol{I}+\beta \boldsymbol{A_{i}}\right]\right]}\right)\right\} \nonumber \\&&=\rho\left(\prod_{i=1}^{T}\left[(1-\delta)\boldsymbol{I}+\beta \boldsymbol{A_{i}}\right]\right)^{m}
\text{.}
\end{eqnarray}

\noindent Because the set of system matrices is bounded, its joint spectral radius is equal to its generalized spectral radius. Hence, we can calculate the joint spectral radius as

\begin{equation}
\widehat{\rho}(M)=\lim_{m\to\infty}\overline{\rho}_{mT}(M)^{\frac{1}{mT}}= \rho\left(\prod_{i=1}^{T}\left[(1-\delta)\boldsymbol{I}+\beta \boldsymbol{A_{i}}\right]\right)^{\frac{1}{T}} \nonumber \text{.}
\end{equation}

\noindent According to Theorem \ref{thm:JSR}, if the joint spectral radius of the set of system matrices is less than $1$, the epidemic dies out. Therefore, in this case, the epidemic dies out if $\rho\left(\prod_{i=1}^{T}\left[(1-\delta)\boldsymbol{I}+\beta \boldsymbol{A_{i}}\right]\right)^{\frac{1}{T}}<1$ or equivalently

\begin{equation}
\rho(\prod_{i=1}^{T}{\left[(1-\delta)\boldsymbol{I}+\beta \boldsymbol{A_{i}}\right]})<1 \label{eq:periodic} \text{.}
\end{equation}

\end{proof}

\noindent The derived epidemic threshold for the periodic dynamic network is the same as the threshold in \cite{Faloutsos1}.

\indent In the next corollary, we propose the condition under which epidemics die out in the case of dynamic networks with all the elements of set $L$ corresponding to regular networks. In regular dynamic networks, although links between nodes are dynamic and a given node can change its neighbors at each index, all nodes have the same node degree, and they preserve their node degrees.

\begin{crl} \label{crl:regular}
The epidemic in a dynamic regular network with undirected graphs and node degree of $\overline{k}$ dies out if 
$\frac{\beta}{\delta}<\frac{1}{\overline{k}}$.
\end{crl}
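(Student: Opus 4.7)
The plan is to reduce this corollary to a direct application of Theorem~\ref{thm:symmetric}. Since every graph in the network is undirected and $\overline{k}$-regular, each adjacency matrix $\boldsymbol{A_i} \in L$ is symmetric and has every row sum equal to $\overline{k}$. Consequently the all-ones vector $\boldsymbol{1}$ is an eigenvector of $\boldsymbol{A_i}$ with eigenvalue $\overline{k}$, and because $\boldsymbol{A_i}$ is a nonnegative symmetric matrix, Perron--Frobenius guarantees that $\overline{k}$ is in fact its spectral radius, i.e. $\rho(\boldsymbol{A_i}) = \overline{k}$ for every $\boldsymbol{A_i} \in L$.

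From this I would read off the spectrum of the system matrix $\boldsymbol{M_i} = (1-\delta)\boldsymbol{I} + \beta \boldsymbol{A_i}$: its eigenvalues are $(1-\delta) + \beta \lambda$ as $\lambda$ ranges over the eigenvalues of $\boldsymbol{A_i}$. Therefore $\rho(\boldsymbol{M_i}) = (1-\delta) + \beta \overline{k}$ for every index $i$, and in particular
\begin{equation}
\sup\{\rho(\boldsymbol{M_i}) : \boldsymbol{M_i} \in M\} = (1-\delta) + \beta \overline{k}. \nonumber
\end{equation}

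Applying Theorem~\ref{thm:symmetric}, the origin is asymptotically stable (and the epidemic dies out) whenever this common spectral radius is strictly less than $1$, i.e.\ $(1-\delta) + \beta \overline{k} < 1$, which rearranges to $\beta/\delta < 1/\overline{k}$, as claimed.

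There is no real obstacle: once one observes that all $\boldsymbol{A_i}$ share the same Perron eigenvalue $\overline{k}$, the symmetric-case theorem collapses the joint spectral radius of the whole family $M$ to the single scalar $(1-\delta)+\beta\overline{k}$, and the threshold follows immediately. The one point worth stating carefully in the write-up is the use of Perron--Frobenius (rather than just exhibiting $\boldsymbol{1}$ as an eigenvector) to ensure that $\overline{k}$ is the \emph{largest} eigenvalue in absolute value, so that $\rho(\boldsymbol{A_i}) = \overline{k}$ rather than merely $\rho(\boldsymbol{A_i}) \geq \overline{k}$.
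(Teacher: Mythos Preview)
Your proposal is correct and follows essentially the same route as the paper's own proof: both identify $\rho(\boldsymbol{A_i})=\overline{k}$ for every regular undirected graph in $L$, deduce $\rho(\boldsymbol{M_i})=(1-\delta)+\beta\overline{k}$, and then invoke Theorem~\ref{thm:symmetric} to obtain the threshold $\beta/\delta<1/\overline{k}$. Your explicit appeal to Perron--Frobenius to justify that $\overline{k}$ is the spectral radius (not merely an eigenvalue) is a welcome bit of extra care that the paper leaves implicit.
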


\begin{proof}
We know that the spectral radius of a regular symmetric graph is equal to its node degree. Considering $\boldsymbol{M_{i}}=(1-\delta)\boldsymbol{I}+\beta \boldsymbol{A_{i}}$, we can calculate the spectral radius of $\boldsymbol{M_{i}}$ for any regular adjacency matrix as follows:

\begin{equation}  
\rho(\boldsymbol{M_{i}})=1-\delta+\beta \rho(\boldsymbol{A_{i}})= 1-\delta+\beta \overline{k}\nonumber\text{.}
\end{equation}

\noindent Therefore, all system matrices have the spectral radius of $1-\delta+\beta \overline{k}$, so the largest spectral radius of the system matrices is also equal to $1-\delta+\beta \overline{k}$. Based on Theorem \ref{thm:symmetric}, we can conclude that epidemics in dynamic networks with regular undirected graphs die out if the largest spectral radius of the system matrices is less than $1$. Hence, the epidemic dies out if $1-\delta+\beta \overline{k} <1$ or equivalently if

\begin{equation}
\frac{\beta}{\delta}<\frac{1}{\overline{k}} \label{eq:reg} \text{.}
\end{equation}
\end{proof}

\noindent Eq. (\ref{eq:reg}) remarkably is identical to the epidemic threshold for static regular networks \cite{Faloutsos3}. \\
\indent Hitherto, we have studied dynamic networks with deterministic and given adjacency matrices. However, for cases in which adjacency matrices are stochastic rather than deterministic, the joint spectral radius of set of the system matrices is a random variable. Therefore, the condition under which epidemics die out is expressed in terms of statistical characteristics of the joint spectral radius. 
Dynamic Gilbert network is an example of such dynamic network with stochastic adjacency matrices. In a Gilbert network with parameter $P$, every link exists with probability $P$ \cite{Gilbert}. In other words, the existence of a link is a Bernoulli random variable with parameter $P$. For a dynamic Gilbert network, at every time instant, these binary random variables are redrawn according to Bernoulli distribution. 
In the next corollary, we derive the upper bound for the probability of spreading epidemics for Gilbert dynamic networks in terms of the expected value of the joint spectral radius.

\begin{figure*}[ht]
  \includegraphics[width=\textwidth]{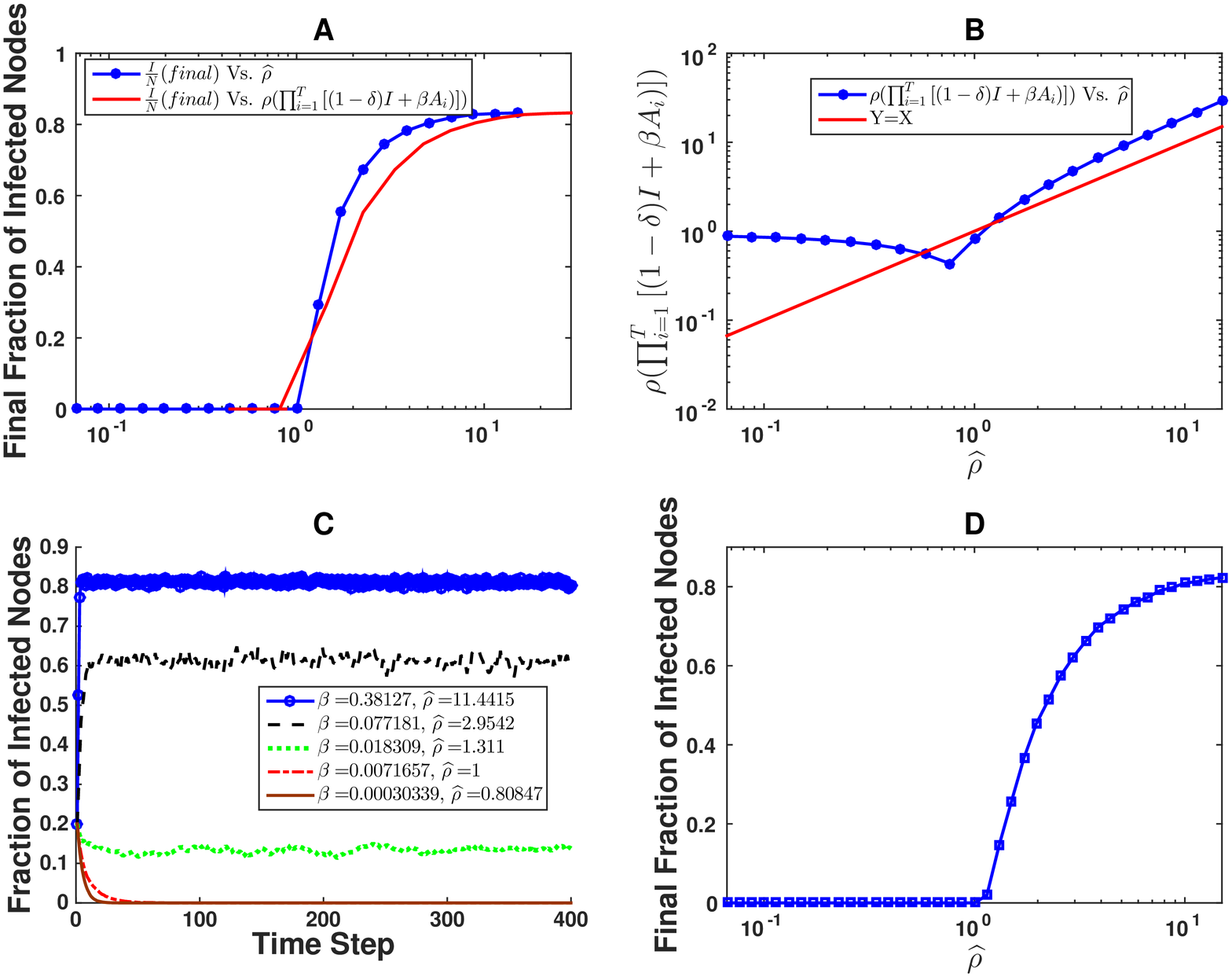}
  \caption{Simulation results of epidemics on Watts-Strogatz and Barabasi-Albert dynamic networks (A) Final fraction of infected nodes for dynamic Watts-Strogatz networks with $1000$ nodes and rewiring probability of $0.5$. (B) Comparison between the joint spectral radius and the spectral radius of the product of system matrices for dynamic Watts-Strogatz networks. (C) Fraction of infected nodes over time for dynamic Barabasi-Albert network with $1000$ nodes. (D) Final fraction of infected nodes for dynamic Barabasi-Albert network with $1000$ nodes vs. the joint spectral radius.}
  \label{fig:Duke1}
\end{figure*}

\begin{crl} \label{crl:Gilbert}
For a given dynamic Gilbert network with $N$ nodes and probability $P$ of the existence of links, the probability that an epidemic spreads is upper bounded by $1-\delta+(N-1)\beta P$.
\end{crl}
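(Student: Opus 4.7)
My plan is to combine three ingredients already at hand: the die-out criterion of Theorem \ref{thm:JSR}, Markov's inequality applied to the (non-negative) joint spectral radius, and the undirected simplification of Theorem \ref{thm:symmetric}. After these are chained, only an estimate of the expected spectral radius of a single Gilbert realization is left.

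First I would observe that in a dynamic Gilbert network each realized $\boldsymbol{A_i}$ is symmetric, so every system matrix $\boldsymbol{M_i}=(1-\delta)\boldsymbol{I}+\beta\boldsymbol{A_i}$ is symmetric. By Theorem \ref{thm:symmetric} the joint spectral radius then collapses to
\[
\widehat{\rho}(M)\;=\;\sup_i\rho(\boldsymbol{M_i})\;=\;1-\delta+\beta\sup_i\rho(\boldsymbol{A_i}).
\]
Theorem \ref{thm:JSR} says that the epidemic dies out whenever $\widehat{\rho}(M)<1$, so the event ``epidemic spreads'' is contained in $\{\widehat{\rho}(M)\geq 1\}$. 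Because the adjacency draws are stochastic, $\widehat{\rho}(M)$ is a non-negative random variable, and Markov's inequality yields
\[
P(\text{spread})\;\leq\;P(\widehat{\rho}(M)\geq 1)\;\leq\;E\!\left[\widehat{\rho}(M)\right]\;=\;1-\delta+\beta\,E\!\left[\sup_i\rho(\boldsymbol{A_i})\right].
\]
What remains is to show that $E[\sup_i\rho(\boldsymbol{A_i})]\leq (N-1)P$. I would control $\rho(\boldsymbol{A_i})$ by a simpler linear statistic of $\boldsymbol{A_i}$ whose expectation is clean: since $\boldsymbol{A_i}$ is non-negative, the Perron--Frobenius eigenvalue is bounded by the maximum row sum, i.e.\ the maximum node degree; and each row of a Gilbert matrix is a sum of $N-1$ independent Bernoulli$(P)$ variables with mean $(N-1)P$. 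Linearity of expectation and substitution back into the Markov chain above would then deliver the announced bound $1-\delta+(N-1)\beta P$.

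The delicate point, and what I expect to be the main obstacle, is exactly this last step: bridging from $E[\rho(\boldsymbol{A_i})]$ to the expected single-node degree $(N-1)P$. A naive Jensen argument for the convex spectral-radius map points the wrong way, and a supremum over many independent realizations can inflate the row-sum bound outright. I suspect the cleanest route, consistent with the Corollary's framing in terms of the expected value of a single JSR, is to read $M$ as the one-element random set $\{\boldsymbol{M_t}\}$ at a representative time step, so that the supremum over $i$ collapses; then Markov applied to $\rho(\boldsymbol{M_t})$ together with the row-sum bound and linearity of expectation gives $E[\rho(\boldsymbol{A_t})]\leq (N-1)P$, and hence the claim. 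If this single-draw reading is too lax, a back-up is to replace the row-sum bound with another submultiplicative norm via the four-member inequality, but the precise $(N-1)P$ factor in the statement strongly points to a row-sum style estimate.
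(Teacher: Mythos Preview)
Your reduction via Theorem \ref{thm:symmetric} is a natural first move, but the step you flag as ``delicate'' is a genuine gap that cannot be closed along the line you sketch. The inequality you need is $E[\sup_i \rho(\boldsymbol{A_i})]\leq (N-1)P$, and neither reading you propose delivers it. If $M$ collects the realizations along an infinite trajectory, then with probability one every Gilbert graph on $N$ nodes eventually appears, including the complete graph, so $\sup_i\rho(\boldsymbol{A_i})=N-1$ deterministically and you end up with $1-\delta+(N-1)\beta$, not the claimed bound. The single-draw reading does not rescue you either: the row-sum bound controls $\rho(\boldsymbol{A_t})$ by the \emph{maximum} degree, whose expectation strictly exceeds $(N-1)P$; and in fact $E[\rho(\boldsymbol{A_t})]>(N-1)P$ already for $N=3$, $P=\tfrac12$, where direct enumeration gives $E[\rho]=(5+3\sqrt{2})/8\approx 1.16>1$. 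So the route through Theorem \ref{thm:symmetric} cannot reproduce the constant $(N-1)P$.

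The paper does not use Theorem \ref{thm:symmetric} at all. It works with the induced $\ell_1$ (column-sum) norm on the full $k$-fold product $\boldsymbol{\widehat{M}}=\prod_{i=1}^k \boldsymbol{M_i}$ and shows, by induction on $k$ in the Appendix, that the expected column sum of $\boldsymbol{\widehat{M}}$ equals $[1-\delta+(N-1)\beta P]^k$ for every column. Column sums are linear in the matrix entries, so expectation passes through them exactly and the multiplicative structure is preserved across independent factors --- precisely the property your spectral-radius route lacks. From that identity the paper reads off $E\{\widehat{\rho}_k(M)\}$, takes the $k$-th root and limit to obtain $E\{\widehat{\rho}(M)\}=1-\delta+(N-1)\beta P$, and then applies Markov exactly as you do. The takeaway is that the statistic that makes the argument go through is a column sum (a linear functional with mean $(N-1)P$ on the off-diagonal part and clean behavior under products), not a spectral radius or a maximum row sum.
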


\begin{proof}
We define $\boldsymbol{\widehat{M}}$ as $\boldsymbol{\widehat{M}}=\prod_{i=1}^{k}{\boldsymbol{M_{i}}}$ where $\boldsymbol{M_{i}}$ denotes the system matrix corresponding to one realization of a Gilbert dynamic network's adjacency matrix. $\widehat{m}_{q,n}$ denotes the element in the $q^{th}$ row and $n^{th}$ column of matrix $\boldsymbol{\widehat{M}}$. In the Appendix, we show that for all columns of $\boldsymbol{\widehat{M}}$

\begin{equation}
E\left\{\sum_{q=1}^{N}{|\widehat{m}_{q,n}|}\right\}=\left[1-\delta+(N-1)\beta P\right]^k  \forall n=1,2,...,N \label{eq:inf} \text{,}
\end{equation}

\noindent where $E$ denotes the expected value. Considering  $\left|\left|\boldsymbol{\widehat{M}}\right|\right|_{1}=\max_{n}{\sum_{q=1}^{N}{|\widehat{m}_{q,n}|}}$ and (\ref{eq:inf}), we can calculate $E\left\{\left|\left|\boldsymbol{\widehat{M}}\right|\right|_{1}\right\}$ as 
\begin{equation}
E\left\{\left|\left|\boldsymbol{\widehat{M}}\right|\right|_{1}\right\}=\left[1-\delta+(N-1)\beta P\right]^k \nonumber \text{.}
\end{equation}

\noindent Because the above equality holds for any product of $\boldsymbol{M_{i}}$s, $E\left\{\widehat{\rho}_{k}(M)\right\}$ can be written as

\begin{equation} E\left\{\widehat{\rho}_{k}(M)\right\}=\left[1-\delta+(N-1)\beta P\right]^k \nonumber \text{.}
\end{equation}

\noindent Consequently, the expected value of the joint spectral radius can be calculated as 

\begin{equation}
E\left\{\widehat{\rho}(M)\right\}=E\left\{\lim_{k\to\infty}\widehat{\rho}_{k}(M)^{\frac{1}{k}}\right\}=\left[1-\delta+(N-1)\beta P\right] \nonumber\text{.}
\end{equation}

\noindent We employ the Markov inequality and the expected value of the joint spectral radius to compute the upper bound for the probability that the joint spectral radius is more than $1$. Using the Markov inequality, we can write 

\begin{equation}
Prob(\widehat{\rho}(M)\geq 1)\leq E\{\widehat{\rho}(M)\} \label{eq:markovin}\text{.}
\end{equation} 

\noindent Substituting the expected value of the joint spectral radius in Eq. (\ref{eq:markovin}), we conclude that

\begin{equation}
Prob(\widehat{\rho}(M)\geq 1)\leq 1-\delta+(N-1)\beta P \label{eq:markovin2}\text{.}
\end{equation} 

\noindent According to Theorem \ref{thm:JSR}, the epidemic dies out if the joint spectral radius is less than $1$. Therefore, the probability of the epidemic spreading is equal to the probability that the joint spectral radius is greater than $1$. Considering Eq. (\ref{eq:markovin2}), we conclude that the probability of epidemic spread is upper bounded by $1-\delta+(N-1)\beta P$. 
\end{proof}

However, when $[1-\delta+(N-1)\beta P]$ is greater than $1$, this condition is not informative. Therefore, we consider the $min\{1,[1-\delta+(N-1)\beta P]\}$ as the upper bound for the probability of spreading.

\section{SIMULATION RESULTS\label{section:simulation}}

In this section, we validate our theoretical results by simulating an epidemic on synthetic and real dynamic networks. First, we simulate an epidemic on a dynamic Watts-Strogatz network and compare the derived epidemic threshold with the threshold proposed in \cite{Faloutsos2}. Then, the simulation result of the final fraction of infected nodes versus the joint spectral radius for a dynamic Barabasi-Albert network is presented. Further, we evaluate our analytical results for real networks by simulating an epidemic on the set of extracted graphs from the MIT Reality Mining data set \cite{MIT}. Next, an epidemic on a dynamic regular network is simulated. Finally, we validate the derived upper bound for the probability of the epidemic spreading in a dynamic Gilbert network using the simulation results.\\
\begin{figure*}[ht]
  \includegraphics[width=\textwidth]{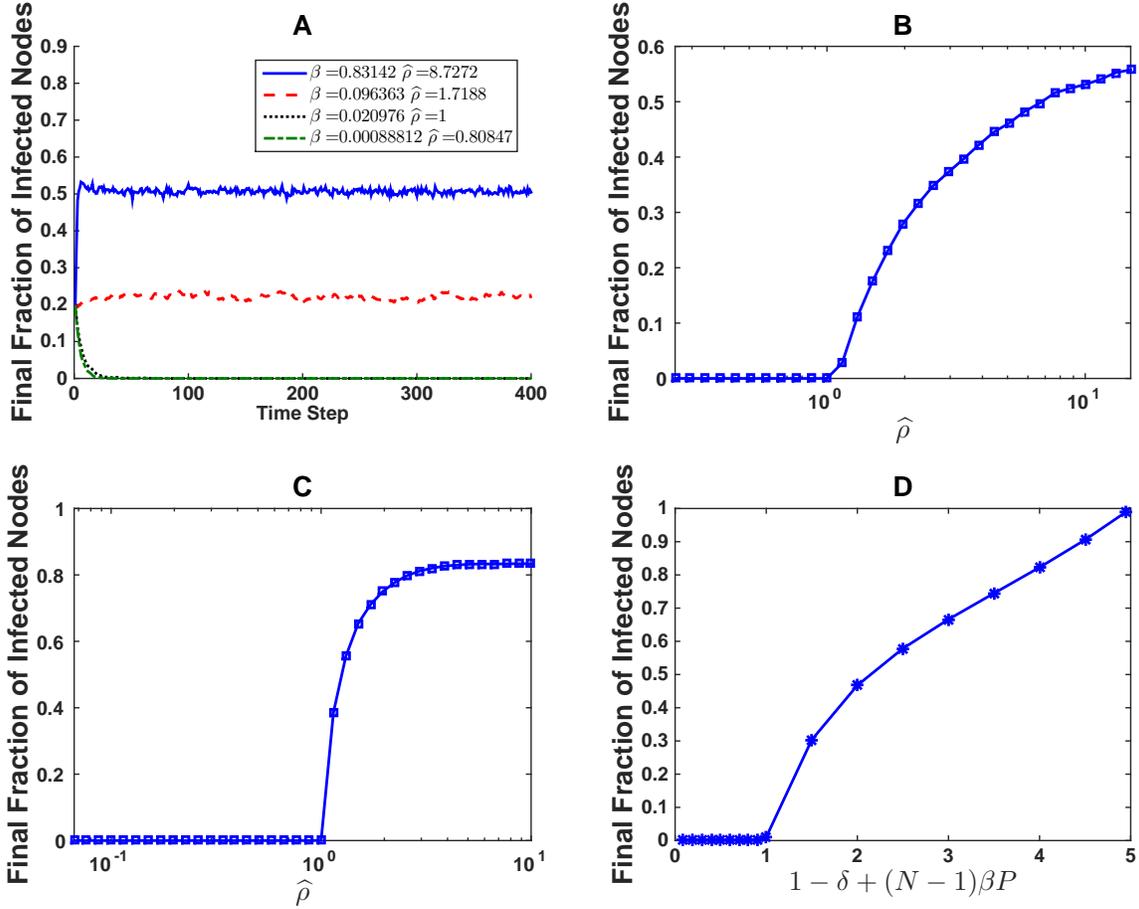}
  \caption{(A) Fraction of infected nodes for the MIT Reality Mining dynamic over time. (B) Final fraction of infected nodes for MIT Reality Mining dynamic vs. the joint spectral radius. (C) Final fraction of infected nodes for a dynamic regular network with $1000$ nodes and node degree of $8$. (D) Final fraction of infected nodes for a dynamic Gilbert network with $1000$ nodes and node degree of $8$. }
    \label{fig:Duke2}
\end{figure*}
\indent Fig. \ref{fig:Duke1} illustrates the simulation results of epidemics on Watts-Strogatz and Barabasi-Albert dynamic networks. Both networks contain $1000$ nodes, and the value of $\delta=0.2$ is kept constant while increasing $\beta$. The final fraction of infected nodes is denoted by $\frac{I}{N}$, where $I$ denotes the final number of infected nodes and $N$ the total number of nodes. Fig. \ref{fig:Duke1} (A) plots $\frac{I}{N}$ versus the joint spectral radius of system matrices and the spectral radius of the products of the system matrices for a dynamic Watts-Strogatz network with rewiring probability $0.5$. In order to realize a dynamic Watts-Strogatz network, the adjacency matrix of the network at each index is chosen randomly from a set of four Watts-Strogatz graphs with average node degrees of $4$, $8$, $12$, and $16$ and, spectral radii, respectively, of $4.46242$, $8.41081$, $12.40911$, and $16.38739$. 
The joint spectral radius of the set of system matrices is equal to the spectral radius of the system matrix corresponding to the adjacency matrix with the largest spectral radius.
We increase $\beta$ from $0.00052$ to $0.86652$ in order to generate different epidemic strengths. The number of iterations for each case is $20$. The observation is made that epidemics die out in all cases in which the value of the joint spectral radius is less than $1$. As soon as the value of the joint spectral radius increases beyond $1$, the epidemic spreads, thus confirming the analytical results of Theorems \ref{thm:JSR} and \ref{thm:symmetric}.    
\noindent The curve of the final fraction of infected nodes versus the spectral radius of the system matrices product shows that  $\rho(\prod_{i=1}^{T}{\left[(1-\delta)\boldsymbol{I}+\beta \boldsymbol{A_{i}})\right]})$ is not an accurate epidemic threshold. Epidemics spread for some of its values that are less than $1$. This simulation result contradicts analytical results of Theorem I in \cite{Faloutsos2} that state that if the spectral radius of the product is less than $1$, the epidemic dies out. \\
\indent Fig. \ref{fig:Duke1} (B) compares the spectral radius of the product of system matrices of the dynamic Watts-Strogatz network with the joint spectral radius of its set of system matrices. The curve of $Y=X$ allows us to determine the values of the joint spectral radius for which the spectral radius of the product is greater than the joint spectral radius or vice versa. Some points in Fig. \ref{fig:Duke1} (B) have joint spectral radius greater than $1$, and the spectral radius of the product less than $1$, resulting in incorrect predictions of epidemics dying out for these cases if we choose the spectral radius of the product as the epidemic threshold. \\
\indent Fig. \ref{fig:Duke1} (C) shows the fraction of infected nodes over time for a dynamic Barabasi-Albert network. In order to realize a dynamic Barabasi-Albert network, we have selected four Barabasi-Albert graphs with average node degree of $4$, $8$, $12$, and $16$ with, spectral radii, respectively, of $12.66217$, $17.36462$, $22.36887$, and $27.91071$, as the set of adjacency matrices. During simulation, the adjacency matrix of this dynamic network at each index is randomly chosen from this set of matrices with equal probability.
The number of iterations for all cases is $20$, and the initial fraction of infected nodes is $0.2$.
Increasing the value of $\beta$ leads to an increase in the joint spectral radius and, eventually, the final fraction of infected nodes. Also, the epidemics die out for cases in which the joint spectral radius is less than $1$. Fig. \ref{fig:Duke1} (D) depicts the final fraction of infected nodes for the dynamic Barabasi-Albert network where $\beta$ increases from $ 0.00038$ to $0.63481$. Results show that epidemics spread for cases in which the joint spectral radius is greater than $1$. \\
\indent Fig. \ref{fig:Duke2} (A) shows the fraction of infected nodes over time for a dynamic network that has time-varying adjacency matrices extracted from the MIT Reality Mining data set \cite{MIT}. This data set contains the adjacency connectivity matrix of $94$ persons, obtained by using mobile phones pre-installed with various specific software, including a logger of Bluetooth devices that was triggered when the distance between two mobile phones was approximately $5$ m or less. Bluetooth scans were carried out every $5$ min. This data set contains information collected from mobile phones from September $2004$ to June $2005$. We extract eight adjacency matrices for $8$ consecutive hr, from $8:00$ a.m. to $4:00$ p.m. on September 1, 2004. Specral radii of these matrices are, respectively, $6.30117$, $5.41546$, $9.44439$, $9.09696$, $8.36535$, $9.53451$, $9.05251$, and $7.41181$. At each time index, the adjacency matrix was randomly chosen from the extracted matrices. For all the four cases, $\delta=0.2$ and the initial fraction of infected nodes is $0.2$. The number of iterations for each case is $50$. Increasing $\beta$ makes the joint spectral radius larger, and epidemic spreads for the values of the joint spectral radius larger than $1$. \\
\indent Fig. \ref{fig:Duke2} (B) depicts the final fraction of infected nodes for the dynamic network in Fig. \ref{fig:Duke2} (A). We fix the value of $\delta$ at $0.2$ and increase the value of $\beta$ from $0.00089$ to $0.83142$. The number of iterations for each case is $50$, therefore, the epidemic spreads when the joint spectral radius is greater than $1$. For cases in which the epidemic spreads, the final fraction of infected nodes increases with increasing value of joint spectral radius. \\
\indent Fig. \ref{fig:Duke2} (C) shows the final fraction of infected nodes for a dynamic regular network with $1000$ nodes and node degree $\overline{k}=8$ versus the product of $\frac{\beta}{\delta}$ and node degree. 
We simulate epidemics for various values of $\beta$ in the interval $[0.00106,0.99089]$, while fixing the value of $\delta$ at $0.2$. The number of iterations for each case is $100$. Therefore, epidemics die out when $\frac{\beta}{\delta}<\frac{1}{\overline{k}}$, confirming the result of Corollary \ref{crl:regular}. \\
\begin{table}[h]
\caption{$\delta$ and $\beta$ values used in simulation of the epidemic in the dynamic Gilbert network}
\label{tab:sim}
\begin{center}
\begin{tabular}{c|c|c}
$\boldsymbol{\delta}$ & $\boldsymbol{\beta}$ & $\boldsymbol{1-\delta+(N-1)\beta P}$\\
\hline
$0.95$ & $0.01$ & $0.09$\\
$0.85$ & $0.01$ & $0.19$\\
$0.74$ & $0.01$ & $0.3$\\
$0.64$ & $0.01$ & $0.4$\\
$0.54$ & $0.01$ & $0.5$\\
$0.44$ & $0.01$ & $0.6$\\
$0.34$ & $0.01$ & $0.7$\\
$0.24$ & $0.01$ & $0.8$\\
$0.14$ & $0.01$ & $0.9$\\
$0.04$ & $0.01$ & $1$\\
$0.7$ & $0.3$ & $1.5$\\
$0.6$ & $0.4$ & $2$\\
$0.5$ & $0.5$ & $2.5$\\
$0.4$ & $0.6$ & $3$\\
$0.3$ & $0.7$ & $3.5$\\
$0.2$ & $0.8$ & $4$\\
$0.1$ & $0.9$ & $4.5$\\
$0.01$ & $0.99$ & $4.95$\\
\end{tabular}
\end{center}
\label{tbl:parameter}
\end{table}

Fig. \ref{fig:Duke2} (D) depicts the final fraction of infected nodes for a dynamic Gilbert network with $1000$ nodes and a probability of connection $P=0.004$ versus $1-\delta+(N-1)\beta P$, which is the upper bound for the probability of epidemic spread derived in Corollary \ref{crl:Gilbert}.
Table \ref {tab:sim} shows the chosen values of $\delta$ and $\beta$ in the simulation as well as values of $1-\delta+(N-1)\beta P$. The epidemic dies out up to the point at which the upper bound is less than $1$. When this upper bound reaches $1$, the epidemic begins to spread, confirming the result of Corollary \ref{crl:Gilbert}. Although having an upper bound greater than $1$ for a probability is not informative, it can be used as a measure of epidemic strength. This just described situation in Fig. \ref{fig:Duke2} (D), as increasing the value of the upper bound leads to an increase in the final fraction of infected nodes.

\section{CONCLUSION}
In this paper, we study the spread of SIS epidemics in dynamic networks. We propose an approach to derive the analytical epidemic threshold that can be applied to any dynamic network with an adjacency matrix randomly chosen from a set of matrices at each index. A linearized version of the nonlinear epidemic system is employed to derive the epidemic threshold. We show that an epidemic dies out if the origin is an asymptotically stable equilibrium point. We derive the epidemic threshold for dynamic networks using the joint spectral radius of  system matrices. We calculate the simplified version of the epidemic threshold for undirected dynamic networks based on the fact that the joint spectral radius of a set of symmetric matrices is equal to the largest spectral radius of the matrices in that set. Then, we derive the epidemic threshold for dynamic regular networks. For dynamic Gilbert networks, we compute the upper bound of the probability of an epidemic spreading in terms of the expected value of the joint spectral radius.\\
\indent Our analytical results show that epidemic thresholds of dynamic networks are determined by the dynamic topologies and epidemic strengths in the networks. In particular, the joint spectral radius of the set of system matrices determines whether or not the epidemic dies out. In other words, the joint spectral radius characterizes epidemic strengths and the level of connectivity between the nodes in dynamic networks over time. In the case of undirected networks, the joint spectral radius is dependent only on the adjacency matrix with the largest spectral radius, thereby implying that, for undirected networks, dynamics of epidemics are determined primarily by the adjacency matrix with the largest spectral radius. Variance in topology of dynamic networks impacts the spread of epidemics. In order to validate our analytical results for dynamic networks with various dynamic topologies, we simulate epidemics on Watts-Strogatz, Barabasi-Albert, Regular, MIT Reality Mining, and Gilbert dynamic networks. The dynamic Watts-Strogatz network is used to model a dynamic network with small-world properties, including a high clustering coefficient and small shortest path. On the other hand, the dynamic Gilbert network is employed to validate the derived epidemic threshold for dynamic networks with random structure and small clustering coefficients. Many networks in the real world, including e-mail networks, the world wide web, and biological networks, are considered scale-free networks that can be modeled by Barabasi-Albert networks \cite{Faloutsos5, Barabasi, Ebel, Clauset,Medina,Vespignani}. Our simulation result for dynamic Barabasi-Albert networks validates the derived epidemic threshold for such scale-free networks. Also, we verify our theoretical result with a real-life network. We simulate epidemics on the dynamic network with adjacency matrices extracted from the MIT Reality Mining data set \cite{MIT}, proving that the derived epidemic threshold holds for this real-life network as well. \\
\indent The model of dynamic switching networks in this paper does not take into account any temporal correlation between consecutive adjacency matrices. This assumption does not hold for dynamic networks with strong correlation between consecutive adjacency matrices. For example, for the simulation of epidemics on the MIT Reality Mining data set, the adjacency matrix at each time step is chosen randomly and independently from other time steps. However, in reality, the sequence of adjacency matrices follows a temporal order, implying the existence of correlation between the adjacency matrices.
The derived epidemic threshold corresponds to the most possible vulnerable sequence of matrices against epidemics, because in our framework, we assume that adjacency matrices are chosen completely at random and independent of each other. 
That is because the joint spectral radius of the set of system matrices is determined by the sequence of adjacency matrices that are the most vulnerable to epidemics. Thus, in the case of dynamic networks with temporal correlation, if the framework predicts the death of the epidemic, the epidemic will die out, even though this model ignores temporal correlations. However, the prediction can be too conservative; That is, it may predict the spread of epidemic, when the epidemic actually dies out.\\
\indent Our analysis is based on the assumptions that the parameters and probabilities of the epidemics, the underlying structure, and topology of the transmission network are known. However, these assumptions will not hold in all scenarios. A robustness analysis with respect to model parameters can validate the application of the framework to such scenarios. We illustrate, all through this paper, some aspects of the robustness of our analysis by choosing adjacency matrices randomly and selecting widely ranging values of epidemic probabilities in the simulations. The framework developed in this paper may be extended to the derivation of epidemic thresholds for other types of epidemic models. It can be applied to cases in which the linearized version of the dynamical system corresponding to the epidemic model at the disease-free equilibrium point can be expressed in a matrix form similar to Eq. (\ref{eq:PM}).  


\appendix

 \label{ap:final}
\begin{thm}\label{thm:terkoondam}Consider $\boldsymbol{\widehat{M}}=\prod_{i=1}^{k}{\boldsymbol{M_{i}}}$ where $\boldsymbol{M_{i}}$ denotes the system matrix corresponding to one realization of a Gilbert dynamic network's adjacency matrix. For the matrix $\boldsymbol{\widehat{M}}$, the expected summation value of each column's elements is 

\begin{equation}
E\left\{\sum_{q=1}^{N}{|\widehat{m}_{q,n}|}\right\}=\left[1-\delta+(N-1)\beta P\right]^k  \forall n=1,2,...,N \label{eq:terkoondam} \text{,}
\end{equation}

\noindent where $P$ denotes the probability of link existence and $\widehat{m}_{q,n}$ denotes the element in $q^{th}$ row and $n^{th}$ column of matrix $\boldsymbol{\widehat{M}}$.
\end{thm}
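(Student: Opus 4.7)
The plan is to exploit the nonnegativity of the Gilbert system matrices and the independence of $\boldsymbol{M_{1}},\ldots,\boldsymbol{M_{k}}$ to reduce the claim to a single left-eigenvector computation. First, note that every entry of $\boldsymbol{M_{i}}=(1-\delta)\boldsymbol{I}+\beta\boldsymbol{A_{i}}$ is nonnegative, so the same holds for the product $\boldsymbol{\widehat{M}}$ and therefore $|\widehat{m}_{q,n}|=\widehat{m}_{q,n}$. Hence the desired column sum can be written as $\sum_{q=1}^{N}|\widehat{m}_{q,n}|=\mathbf{1}^{T}\boldsymbol{\widehat{M}}\,\boldsymbol{e_{n}}$, where $\mathbf{1}$ is the all-ones column vector and $\boldsymbol{e_{n}}$ the $n$-th standard basis vector. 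By linearity of expectation, $E\{\mathbf{1}^{T}\boldsymbol{\widehat{M}}\,\boldsymbol{e_{n}}\}=\mathbf{1}^{T}E[\boldsymbol{\widehat{M}}]\,\boldsymbol{e_{n}}$, so it suffices to understand $E[\boldsymbol{\widehat{M}}]$.

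Next, because the adjacency matrices at distinct time indices are drawn independently from the Gilbert model, the matrices $\boldsymbol{M_{1}},\ldots,\boldsymbol{M_{k}}$ are mutually independent, and expanding the product entrywise and applying independence factor by factor gives $E[\boldsymbol{\widehat{M}}]=\prod_{i=1}^{k}E[\boldsymbol{M_{i}}]=\boldsymbol{B}^{k}$, where $\boldsymbol{B}:=E[\boldsymbol{M_{i}}]$ has diagonal entries $1-\delta$ (the random off-diagonal contribution vanishes on the diagonal because the Gilbert graph is loopless) and off-diagonal entries $\beta P$. The crucial observation is then that $\mathbf{1}^{T}$ is a left eigenvector of $\boldsymbol{B}$: summing any single column of $\boldsymbol{B}$ yields one diagonal term $1-\delta$ together with $N-1$ off-diagonal terms $\beta P$, so $\mathbf{1}^{T}\boldsymbol{B}=[1-\delta+(N-1)\beta P]\,\mathbf{1}^{T}$.

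Iterating this identity $k$ times gives $\mathbf{1}^{T}\boldsymbol{B}^{k}=[1-\delta+(N-1)\beta P]^{k}\,\mathbf{1}^{T}$, and multiplying on the right by $\boldsymbol{e_{n}}$ picks out a single entry of the all-ones row, producing $1$ on the right-hand side and yielding the claim uniformly in $n$. The step I expect to require the most care is the matrix factorization $E[\boldsymbol{M_{1}}\cdots\boldsymbol{M_{k}}]=E[\boldsymbol{M_{1}}]\cdots E[\boldsymbol{M_{k}}]$: I would justify it by writing $(\boldsymbol{M_{1}}\cdots\boldsymbol{M_{k}})_{q n}=\sum_{i_{1},\ldots,i_{k-1}}M_{1,q i_{1}}M_{2,i_{1}i_{2}}\cdots M_{k,i_{k-1} n}$ and noting that each summand is a product of scalars drawn from \emph{distinct} independent matrices $\boldsymbol{M_{s}}$, which permits termwise factorization of the expectation even though entries within a single $\boldsymbol{M_{s}}$ may themselves be dependent. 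Everything else is a direct computation.
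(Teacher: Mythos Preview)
Your argument is correct, and it differs in presentation from the paper's proof. The paper proceeds by induction on $k$: it verifies the base case $k=1$ by direct computation, and in the inductive step writes $R=\boldsymbol{M_{k+1}}\boldsymbol{\widehat{M}}$, expands $r_{q,n}=(1-\delta)\widehat{m}_{q,n}+\beta\sum_{j\neq q}X_{j}\widehat{m}_{j,n}$, sums over $q$, and takes expectations using the independence of the Bernoulli variables $X_{j}$ from the entries of $\boldsymbol{\widehat{M}}$. Your approach instead factors the expectation of the full product in one stroke, $E[\boldsymbol{M_{1}}\cdots\boldsymbol{M_{k}}]=\boldsymbol{B}^{k}$, and then observes that $\mathbf{1}^{T}$ is a left eigenvector of $\boldsymbol{B}$ with eigenvalue $1-\delta+(N-1)\beta P$. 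The two arguments are essentially the same computation viewed differently: the paper's inductive step is precisely the one-step left-eigenvector identity $\mathbf{1}^{T}E[\boldsymbol{M_{k+1}}]=[1-\delta+(N-1)\beta P]\,\mathbf{1}^{T}$ applied to the (random) column-sum vector, while you isolate that identity up front and iterate it algebraically. Your version is more compact and makes the structural reason for the formula transparent; the paper's induction avoids having to state and justify the matrix-level factorization $E[\prod\boldsymbol{M_{i}}]=\prod E[\boldsymbol{M_{i}}]$ as a separate lemma, but pays for it with a longer computation. Your care in noting that the entrywise expansion only ever multiplies scalars drawn from distinct independent matrices (so possible dependence among entries of a single $\boldsymbol{M_{s}}$, e.g.\ from symmetry of $\boldsymbol{A_{s}}$, is irrelevant) is exactly the point that needs to be made, and you handle it correctly.
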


\begin{proof}
We prove this theorem via induction. In the case of a Gilbert dynamic network, off-diagonal elements of adjacency matrix $\boldsymbol{A}$ are independent and identically distributed (iid) Bernoulli random variables with parameter $P$. The first step is to show that Eq. (\ref{eq:terkoondam}) is correct when $k=1$. Assume $k=1$. In this case, $\boldsymbol{\widehat{M}}=(1-\delta)\boldsymbol{I}+\beta \boldsymbol{A}$ and $E\left\{\sum_{q=1}^{N}{|\widehat{m}_{q,n}|}\right\}$ can be written as 

\begin{equation}
E\left\{\sum_{q=1}^{N}{|\widehat{m}_{q,n}|}\right\}=(1-\delta)+\beta\sum_{i=1}^{N-1}{E\{X_{i}\}} \label{eq:t1} \text{,}
\end{equation}

\noindent where the $X_{i}$'s are iid random variables with parameter $P$. $E\{X_{i}\}=P$. Therefore, we can rewrite Eq. (\ref{eq:t1}) as

\begin{equation}
E\left\{\sum_{q=1}^{N}{|\widehat{m}_{q,n}|}\right\}=(1-\delta)+(N-1)\beta P \label{eq:t2} \text{.}
\end{equation}

\noindent Eq. (\ref{eq:t2}) shows that Eq. (\ref{eq:terkoondam}) is correct for $k=1$. The second step is to assume that Eq. (\ref{eq:terkoondam}) is correct for $k$ and prove it for $k+1$. Assume $\boldsymbol{\widehat{M}}=\prod_{i=1}^{k}{\boldsymbol{M_{i}}}$, considering the assumption of correctness of Eq. (\ref{eq:terkoondam}) for $k$, $E\left\{\sum_{q=1}^{N}{|\widehat{m}_{q,n}|}\right\}=\left[1-\delta+(N-1)\beta P\right]^k $. Suppose $R=M_{k+1}\boldsymbol{\widehat{M}}$. $r_{q,n}$, the element in the the $q^{th}$ row and the $n^{th}$ column of $R$ can be written in terms of the elements of $\boldsymbol{\widehat{M}}$ as

\begin{equation}
r_{q,n}=(1-\delta)\widehat{m}_{q,n}+\beta\sum_{j=1, j\neq q}^{N} {X_{j}\widehat{m}_{j,n}} \text{,}
\end{equation}

\noindent where the $X_{j}$'s are iid Bernoulli random variables with parameter $P$. Therefore, we can write $\sum_{q=1}^{N}{|r_{q,n}|}$ as 

\begin{equation}
\sum_{q=1}^{N}{|r_{q,n}|}=(1-\delta)\sum_{q=1}^{N}{\widehat{m}_{q,n}}+\beta\sum_{q=1}^{N}{\left[\sum_{j=1, j\neq q}^{N} {X_{j}\widehat{m}_{j,n}}\right]}\label{eq:t3}\text{,}
\end{equation}

\noindent where $X_{j}$'s and $\widehat{m}_{j,n}$ are independent. Hence, $E\left\{\sum_{q=1}^{N}{|r_{q,n}|}\right\}$ can be written as

\begin{eqnarray}
&&E\left\{\sum_{q=1}^{N}{|r_{q,n}|}\right\}=(1-\delta)E\left\{\sum_{q=1}^{N}{\widehat{m}_{q,n}}\right\}+\nonumber\\&&P\beta E\left\{\sum_{q=1}^{N}{\left[\sum_{j=1,j\neq q}^{N} {\widehat{m}_{j,n}}\right]}\right\}\label{eq:t4}\text{.}
\end{eqnarray}

\noindent However, $E\left\{\sum_{q=1}^{N}{\left[\sum_{j=1,j\neq q}^{N} {\widehat{m}_{j,n}}\right]}\right\}=(N-1)E\left\{\sum_{q=1}^{N}{\widehat{m}_{q,n}}\right\}$. Considering $E\left\{\sum_{q=1}^{N}{\widehat{m}_{q,n}}\right\}=\left[(1-\delta)+\beta P (N-1)\right]^{k}$, we can rewrite (\ref{eq:t4}) as

\begin{eqnarray}
&&E\left\{\sum_{q=1}^{N}{|r_{q,n}|}\right\}=\left[(1-\delta)+\beta P (N-1)\right]E\left\{\sum_{q=1}^{N}{\widehat{m}_{q,n}}\right\}\nonumber\\&&=\left[(1-\delta)+\beta P (N-1)\right]^{k+1}\label{eq:t5}\text{.}
\end{eqnarray}

\noindent The result in Eq. (\ref{eq:t5}) for $k+1$ is the last step in the proof of this theorem through induction.

\end{proof}

\end{document}